\documentclass[twoside,a4paper]{article} 
\usepackage{amsmath,graphicx,amssymb,fancyhdr,amsthm,enumerate,textcomp} 
\pdfoutput=1 
\usepackage[usenames]{color} 
\usepackage[latin1]{inputenc} 
\newtheorem{thm}{Theorem}[section]  
\newtheorem{cor}[thm]{Corollary}  
  
\newtheorem{prop}[thm]{Proposition}  
\newtheorem{con}[thm]{Conjecture}  
\theoremstyle{definition}  
\newtheorem{defn}[thm]{Definition}   
\theoremstyle{remark}   
   
\def\beq{\begin{eqnarray}}   
\def\eeq{\end{eqnarray}}   
\def\bsp{\begin{split}}   
\def\esp{\end{split}}

\def\Tr{\mathrm{Tr}}   
\def\d{\mathrm{d}}   
\def\diag{\mathrm{diag}}   
  
\def\T{ {\sf T} }

\def\redpoint{\diamond}  
\def\redtree{\overset{{\displaystyle\diamond}}{\bigwedge}}  
\def\tree{\bigwedge}  
   
\newcommand{\mf}[1]{{\mathfrak #1}}   
   
\newcommand{\mb}[1]{{\mathbb #1}}   
    
\newcommand{\mbold}[1]{\mbox{\boldmath{\ensuremath{#1}}}}

\begin{document}    
    
\title{\Large\textbf{Curvature  operators and scalar curvature invariants}}    
\author{{\large\textbf{Sigbj\o rn Hervik$^\text{\tiny\textleaf}$ and Alan Coley$^\heartsuit$} }    
 \vspace{0.3cm} \\    
$^\text{\tiny\textleaf}$Faculty of Science and Technology,\\    
 University of Stavanger,\\  N-4036 Stavanger, Norway     
\vspace{0.2cm} \\ 
$^\heartsuit$Department of Mathematics and Statistics,\\ 
Dalhousie University, \\ 
Halifax, Nova Scotia, Canada B3H 3J5 
\vspace{0.3cm} \\     
\texttt{sigbjorn.hervik@uis.no, aac@mathstat.dal.ca} }    
\date{\today}    
\maketitle   
\pagestyle{fancy}    
\fancyhead{} 
\fancyhead[EC]{S. Hervik \& A. Coley}    
\fancyhead[EL,OR]{\thepage}    
\fancyhead[OC]{Curvature Operators}    
\fancyfoot{} 
    
\begin{abstract}

We continue the study of the question of when a pseudo-Riemannain manifold can be locally 
characterised by its scalar polynomial curvature invariants 
(constructed from the Riemann tensor and its covariant derivatives).
We make further use of alignment theory and the bivector form of the Weyl operator 
in higher dimensions, and introduce the important notions of diagonalisability
and (complex) analytic metric extension. We show that if there exists an
analytic metric extension of an arbitrary dimensional space of any signature
to a Riemannian space (of Euclidean signature), then that
space is characterised by its scalar curvature invariants.  
In particular, we discuss the Lorentzian case and the neutral 
signature case in four dimensions in more detail.

\end{abstract}

\newpage
\section{Introduction}

Recently, the question of when a pseudo-Riemannain manifold can be locally 
characterized by its scalar polynomial curvature invariants 
constructed from the Riemann tensor and its covariant derivatives has been addressed.

In  \cite{inv} it was shown that in four dimensions (4d) a Lorentzian spacetime 
metric is either $\mathcal{I}$-non-degenerate, and hence locally 
characterized by its scalar polynomial curvature invariants,
or is a degenerate Kundt spacetime \cite{Kundt}. 
Therefore, the degenerate Kundt spacetimes are the only spacetimes 
in 4d that are {\em not} $\mathcal{I}$-non-degenerate, and their 
metrics are the only metrics not uniquely determined by their curvature 
invariants. 
In the proof of the $\mathcal{I}$-non-degenerate theorem in 
\cite{inv} it was necessary, for example,  to determine for which Segre types 
for the Ricci tensor the spacetime is 
$\mathcal{I}$-non-degenerate. By analogy, in higher dimensions it is useful to
utilize the bivector 
formalism for the Weyl tensor. 
Indeed, by defining the Weyl bivector operator 
in higher dimensions \cite{bivect}
and making use of the alignment theory \cite{class}, 
it is possible to algebraically classify any tensor (including the 
Weyl tensor utilizing the eigenbivector problem) in a Lorentzian 
spacetime of arbitrary dimensions, which has proven very useful
in contemporary theoretical physics.

 The link between a metric (or rather, when two metrics can be distinguished up to diffeomorphisms) 
 and the curvature tensors is provided through \emph{Cartan's equivalence principle}.
 An exact statement of this is given in \cite{exsol}; however, the idea is that knowing the 
 Riemann curvature tensor, and its covariant derivatives, with respect to a fixed 
  frame 
 determines the metric up to isometry. It should be pointed out that it may be difficult to actually \emph{construct} the metric, however, the metric is determined in principle. More precisely, given a point, $p$, a frame $E_\alpha$, and the corresponding Riemann tensor $R$ and its covariant derivatives $\nabla R, \dots, \nabla^{(k)}R,\dots $, then if there exists a map such that: 
\beq  
(p,E_\alpha)&\mapsto & (\bar{p},\bar{E}_\alpha), \quad \text{and} \nonumber  \\ 
 (R_{\alpha\beta\gamma\delta},R_{\alpha\beta\gamma\delta;\mu_1},...,R_{\alpha\beta\gamma\delta;\mu_1...\mu_k}) & \mapsto &  (\bar{R}_{\alpha\beta\gamma\delta},\bar{R}_{\alpha\beta\gamma\delta;\mu_1},...,\bar{R}_{\alpha\beta\gamma\delta;\mu_1...\mu_k}), \nonumber 
\eeq 
then there exists an isometry $\phi$ such that $\phi(p)=\bar{p}$ (which induces the abovementioned 
map). Therefore, if we know the components of the Riemann tensor and its covariant derivatives 
with respect to a fixed frame, then the geometry and, in principle, the metric is determined. 
The equivalence principle consequently manifests the connection between the curvature tensors 
and the spacetime metric. The question of whether we can, at least in principle, reconstruct the metric 
from the invariants thus hinges on the question whether we can reconstruct the curvature  
tensors from its scalar polynomial curvature invariants, indicated with a question mark in the following figure:
\[ \begin{array}{cl} \boxed{\text{Metric}, g_{\mu\nu}} \\
{\Updownarrow} & \hspace{-1cm} \leftarrow\text{Equivalence principle} \\
 \boxed{\text{Curvature tensors}}  \\ 
\Downarrow \quad \Uparrow ? \\
\boxed{\text{Invariants}}
\end{array}
\]
It is to address this question, the curvature operators are particularly useful.

In this paper we continue the study of pseudo-Riemannain manifolds and their local 
scalar polynomial curvature invariants. In principle, we
are interested
in the case of arbitrary dimensions and all signatures, although we are primarily  interested
from a physical point of view in the Lorentzian case and most illustrations will be done in
4d. We also briefly
discuss the neutral 
signature case.

First, we shall  discuss the concept of a differentiable manifold being characterised by 
its scalar invariants in more detail.
Suppose we consider the set of invariants as a function of the metric  
and its derivatives: $\mf{I}:g_{\mu\nu}\mapsto \mathcal{I}$ 
(the $\mf{I}$-map);
we are then interested in under what circumstances, given a set 
of invariants $\mathcal{I}_0$, 
this function has an inverse 
$\mf{I}^{-1}(\mathcal{I}_0)$. 
For a metric which is  
$\mathcal{I}$-non-degenerate the invariants characterize the  
spacetime uniquely, at least locally, in the space of (Lorentzian)  
metrics, which means that we can therefore distinguish such metrics  
using their curvature invariants and, hence,  $\mf{I}^{-1}(\mathcal{I}_0)$ is unique
(up to diffeomorphisms).

We then further discuss curvature operators (and introduce an operator calculus)
and their relationship to spacetime invariants.  
Any even-ranked tensor can be
considered as an operator, and there is a natural matrix representation of the
operator ${\sf T}$. 
In particular, for a curvature operator
(such as the zeroth order curvature operator, $\mathcal{R}_0$)
we can consider an eigenvector ${\sf   
v}$ with eigenvalue $\lambda$; i.e., ${\sf T}{\sf v}=\lambda{\sf   
v}$. This allows us to introduce the important notion of {\emph {diagonalisability}}.
In addition, we then 
introduce the notion of {\emph  {analytic metric extension}},
which is a complex analytic continuation  of a
real metric
under more general coordinate transformations than the real diffeomorphisms
which results in a  real bilinear form but which does not necessarily
preserve the metric 
signature.  
We shall show that if a space, $(\mathcal{M},g_{\mu\nu})$, of any signature can be analytically
continued (in this sense) to a Riemannian space (of Euclidean signature), then that
spacetime is characterised by its invariants.  Moreover, if a spacetime is not characterised by its
invariants, then there exists no such analytical continuation of it to a Riemannian space.

Finally, we present a summary of the Lorentzian signature case. 
All of the results presented here are  
illustrated in the 4d case. We also
discuss the 4d pseudo-Riemannian  case of neutral 
signature $(--++)$ (NS space).

\subsection{Metrics characterised by their invariants}  
Consider the continuous metric deformations defined as follows \cite{inv}.  
\begin{defn}  
\noindent For a spacetime $(\mathcal{M},g)$, a (one-parameter) \emph{metric deformation}, $\hat{g}_\tau$, $\tau\in [0,\epsilon)$, is a family of smooth metrics on $\mathcal{M}$ such that  
\begin{enumerate}  
\item{} $\hat{g}_\tau$ is continuous in $\tau$,  
\item{} $\hat{g}_0=g$; and  
\item{} $\hat{g}_\tau$ for $\tau>0$, is not diffeomorphic to $g$.  
\end{enumerate}  
\end{defn}  
\noindent For any given spacetime $(\mathcal{M},g)$ we define the set of all scalar polynomial curvature invariants  
\begin{equation}  
\mathcal{I}\equiv\{R,R_{\mu\nu}R^{\mu\nu},C_{\mu\nu\alpha\beta}C^{\mu\nu\alpha\beta}, R_{\mu\nu\alpha\beta;\gamma}R^{\mu\nu\alpha\beta;\gamma}, R_{\mu\nu\alpha\beta;\gamma\delta}R^{\mu\nu\alpha\beta;\gamma\delta},\dots\} \,. \nonumber  
\end{equation}  
\noindent Therefore, we can consider the set of invariants as a function of the metric  
and its derivatives. However, we are interested in to what extent,  
or under what circumstances, this function has an inverse.  
More precisely, define the function $\mf{I}:g_{\mu\nu}\mapsto \mathcal{I}$ 
(the $\mf{I}$-map) to be the function that calculates the set of 
invariants $\mathcal{I}$ from a metric $g_{\mu\nu}$. Given a set 
of invariants, $\mathcal{I}_0$, what is the nature of the (inverse)
set $\mf{I}^{-1}(\mathcal{I}_0)$?  
In order to address this question, we first need to introduce some terminology \cite{inv}. 
\begin{defn}  
Consider a spacetime $(\mathcal{M},g)$ with a set of invariants  
$\mathcal{I}$. Then, if there does not exist a metric deformation  
of $g$ having the same set of invariants as $g$, then we will call  
the set of invariants \emph{non-degenerate}. Furthermore, the  
spacetime metric $g$, will be called  
\emph{$\mathcal{I}$-non-degenerate}.  
\end{defn}  
  
This implies that for a metric which is  
$\mathcal{I}$-non-degenerate the invariants characterize the  
spacetime uniquely, at least locally, in the space of (Lorentzian)  
metrics. This means that these metrics are characterized by their  
curvature invariants  and therefore we can distinguish such metrics  
using their invariants. Since scalar curvature invariants are  
manifestly diffeomorphism-invariant we can thereby avoid the  
difficult issue whether a diffeomorphism exists connecting two  
spacetimes.  
 
In the above sense of $\mathcal{I}$-non-degeneracy,
a spacetime is completely characterized by its 
invariants and there is only one (non-isomorphic) spacetime with this set of 
invariants (at least locally in the space of metrics, in the sense above). However, we may also want to use the notion of characterization by 
scalar invariants in a different (weaker) sense
(see below).
In order to emphasise the difference between such cases, we will say that
$\mathcal{I}$-non-degenerate metrics are characterised by their invariants \emph{in the strong
sense}, or in short, \emph{strongly} characterised by their invariants
(this is the general sense, and if the clarifier
`strong sense' is omitted, this meaning is implied).

However, in the definition \ref{def:char} (below)
it is useful to regard a pseudo-Rie\-mannian manifold as being
characterised by its invariants if the curvature tensors 
(i.e., the Riemann tensor and all of its covariant derivatives)
can be reconstructed from a knowledge  all of the scalar 
invariants.
It is clear that $\mathcal{I}$-non-degenerate metrics fall into this class, but
another important class of metrics will also be included.  For example, (anti-)de Sitter space is
maximally symmetric and consequently has only one independent curvature component, and also
only one independent curvature invariant (namely the Ricci scalar).  Therefore, we can determine
the curvature tensors of (anti-)de Sitter space by knowing its invariants.  In our definition,
(anti-)de Sitter space is not $\mathcal{I}$-non-degenerate but it will be 
regarded as being characterised by its
invariants.

Therefore, we shall say that metrics that are
characterised by their invariants but are not $\mathcal{I}$-non-degenerate, are \emph{weakly}
characterised by their invariants.  A common factor for metrics being characterised by their
invariants is that the curvature tensors, and hence the spacetime, will inherit certain
properties of the invariants.  In particular, as we will see later, all CSI spacetimes being
characterised by their invariants will be automatically be locally homogeneous; however, not
all of them are $\mathcal{I}$-non-degenerate, like (anti-)de Sitter space.

The definition we will adopt is closely related to the familiar problem of diagonalising
matrices.  We will use curvature operator to relate our problem to linear algebra, and
essentially, we will say that a spacetime is characterised by its invariants if we can
diagonalise its curvature operators.  In order to make this definition a bit more rigourous, we
need to introduce some formal definitons.

\section{Definition} 
Consider an even-ranked tensor $T$. By raising or lowering the indices appropriately, we can construct the tensor with components $T^{\alpha_1...\alpha_k}_{\phantom{\alpha_1...\alpha_k}\beta_1...\beta_k}$. This tensor can be considered as an operator (or an endomorphism) mapping contravariant tensors onto contravariant tensors  
\[ {\sf T}: V\mapsto V,\]   
where $V$ is the vectorspace $V=(T_pM)^{\otimes k}(\equiv\bigotimes_{i=1}^kT_pM)$, or, if the tensor possess  
index symmetries, $V\subset(T_pM)^{\otimes k}$. Here, since $V$ is a vectorspace, there exists a set of basis vectors ${\bf e}_{I}$ so that $V=\mathrm{span}\{e_{I}\}$. Expressing ${\sf T}$ in this basis, we can write the operator in component form: $T^I_{~J}$. This means that we have a natural matrix representation of the operator ${\sf T}$.   
  
Note that we can similarly define a dual operator ${\sf T}^*:V^*\mapsto V^*$ where  
$V^*=(T^*_pM)^{\otimes k}$, mapping covariant tensors onto covariant tensors.  
In component form the dual operator is $T^{*I}_{\phantom{*}~J}$ which in matrix  
form can be seen as the transpose of ${\sf T}$; i.e., ${\sf T}^*={\sf T}^T$. Consequently,  
there is a natural isomorphism between the operator ${\sf T}$ and its dual. We can also consider operators mapping mixed tensors onto mixed tensors.    
  
For an odd-ranked tensor, $S$, we can also construct an operator: however, this time we need to 
consider the tensor product of $S$ with itself, $T=S\otimes S$, which is of even rank.  We 
can also construct an operator with two odd-ranked tensors $S$ and $S'$:  $T=S\otimes S'$.  In 
this way $ T$ is even-ranked and we can lower/raise components appropriately. 
 
Now, consider an operator.  If $T^I_{~J}$ are the components in a certain basis, then this as 
a natural matrix representation.  This is an advantage  because now we can use standard 
results from linear algebra to estabish some related operators.  
 
The utility of these operators when it comes to invariants is obvious since any 
curvature invariant is always an invariant constructed from some operator.  For example, the Kretchmann 
invariant, $R^{\alpha\beta\mu\nu}R_{\alpha\beta\mu\nu}$, is trivially (proportional to) the 
trace of the operator ${\sf 
T}=(T^A_{~B})=\left(R^{\alpha_1\alpha_2\alpha_3\alpha_4}R_{\beta_1\beta_2\beta_3\beta_4}\right)$. 
Similarly, an arbitrary invariant, 
$T^{\alpha_1\alpha_2\cdots\alpha_k}_{\phantom{\alpha_1\alpha_2\cdots\alpha_k}\alpha_1\alpha_2\cdots\alpha_k}$, 
is the trace of the operator ${\sf 
T}=(T^A_{~B})=\left(T^{\alpha_1\alpha_2\cdots\alpha_k}_{\phantom{\alpha_1\alpha_2\cdots\alpha_k}\beta_1\beta_2\cdots\beta_k}\right)$. 
Therefore, it is clear that we can study the polynomial curvature invariants by studing the 
invariants of curvature operators. 
 
The archetypical example of a curvature operator is the Ricci operator, ${\sf R}=(R^\mu_{~\nu})$, which maps vectors onto vectors; i.e., 
\[ {\sf R}: T_pM\mapsto T_pM.\] 
All of the Ricci invariants can be constructed from the invariants of this operator, for example, $\Tr({\sf R})$ is the Ricci scalar and $\Tr({\sf R}^2)=R_{\mu\nu}R^{\mu\nu}$. 

Another commonly used operator is the Weyl operator, ${\sf C}$, which maps bivectors onto bivectors:
\[ {\sf C}:\wedge^2T_pM\mapsto \wedge^2T_pM.\]
The Weyl invariants can also be constructed from this operator by considering traces of powers of ${\sf C}$: $\Tr({\sf C}^n)$.

\section{Eigenvalues and projectors}  
For a curvature operator, ${\sf T}$, consider an eigenvector ${\sf   
v}$ with eigenvalue $\lambda$; i.e., ${\sf T}{\sf v}=\lambda{\sf   
v}$. Note that the symmetry group 
$SO(d,n)$ of the spacetime (of signature $(d,n)$), is naturally imbedded through the tensor 
products $(T_pM)^{\otimes k}$.  By considering the eigenvalues of $\T$ as solutions of the 
characteristic equation: \[ {\mathrm{det}}(\T -\lambda{\sf 1})=0, \] which are 
$GL(k^n,\mb{C})$-invariants.  Since the orthogonal group $SO(d,n-d)$, using the tensor 
product, acts via a representation $\Gamma:  SO(d,n-d)\mapsto GL(k^n)\subset GL(k^n,\mb{C})$, 
Hence, \emph{the eigenvalue of a curvature operator is an $O(d,n-d)$-invariant curvature scalar}.   
Therefore, curvature operators naturally provide us with a set of   
curvature invariants (not necessarily polynomial invariants but derivable from them)   
corresponding to the set of distinct eigenvalues: $\{\lambda_A   
\}$. Furthermore, the set of eigenvalues are uniquely determined   
by the polynomial invariants of ${\sf T}$ via its characteristic   
equation. The characteristic equation, when solved, gives us the   
set of eigenvalues, and hence these are consequently determined by   
the invariants.

We can now define a number of associated curvature operators. For   
example,  for an eigenvector ${\sf v}_A$ so that ${\sf T}{\sf   
v}_A=\lambda_A{\sf v}_{A}$, we can construct the annihilator   
operator:   
\[ {\sf P}_A\equiv ({\sf T}-\lambda_{A}{\sf 1}).\]   
Considering the Jordan block form of ${\sf T}$, the eigenvalue ${\lambda_A}$ corresponds to a set of Jordan blocks. These blocks are of the form:   
\[ {\sf B}_A=\begin{bmatrix}   
\lambda_A & 0 & 0& \cdots & 0 \\   
1 & \lambda_A & 0& \ddots  & \vdots \\   
0      & 1 &\lambda_A& \ddots & 0 \\   
\vdots & \ddots     &\ddots& \ddots & 0 \\   
0    &     \hdots &0   &  1      & \lambda_A   
\end{bmatrix}.\]   
There might be several such blocks corresponding to an eigenvalue   
$\lambda_A$; however, they are all such that $({\sf   
B}_A-\lambda_A{\sf 1})$ is nilpotent and hence there exists an   
$n_{A}\in \mathbb{N}$ such that  ${\sf P}_A^{n_A}$ annihilates the   
whole vector space associated with the eigenvalue $\lambda_A$.   
   
This implies that we can define a set of operators $\widetilde{\bot}_A$ with eigenvalues $0$ or $1$ by considering the products   
\[ \prod_{B\neq A}{\sf P}^{n_B}_B=\Lambda_A\widetilde{\bot}_A,\]   
where $\Lambda_A=\prod_{B\neq A}(\lambda_A-\lambda_B)^{n_B}\neq 0$   
(as long as $\lambda_B\neq \lambda_A$ for all $B$). Furthermore, we can now define  
\[ \bot_A\equiv {\sf 1}-\left({\sf 1}-\widetilde{\bot}_A\right)^{n_A}  \]  
 where $\bot_A$   
is a \emph{curvature projector}. The set of all such curvature   
projectors obeys:   
\beq {\sf 1}=\bot_1+\bot_2+\cdots+\bot_A+\cdots,   
\quad \bot_A\bot_B=\delta_{AB}\bot_A.   
\eeq We can use these   
curvature projectors to decompose the operator ${\sf T}$:   
\beq   
{\sf T}={\sf N}+\sum_A\lambda_A\bot_A. \label{decomp}   
\eeq   
The   
operator ${\sf N}$ thus contains  all the information not   
encapsulated in the eigenvalues $\lambda_A$. From the Jordan form   
we can see that ${\sf N}$ is nilpotent; i.e., there exists an   
$n\in\mathbb{N}$ such that ${\sf N}^n={\sf 0}$. In particular, if   
${\sf N}\neq 0$, then ${\sf N}$ is a negative/positive   
boost weight operator which can be used to lower/raise the   
boost weight of a tensor \cite{class,bivect}.
  
We also note that   
\[ {\sf N}=\sum_A {\sf N}_A, \quad {\sf N}_A\equiv \bot_A {\sf N}\bot_A.\]   
 Consequently, we have the orthogonal decomposition:   
\beq   
{\sf T}=\sum_A\left({\sf N}_A+\lambda_A\bot_A\right). \label{decomp2}   
\eeq   
Note that we can achieve this decomposition by just using operators and their invariants.   
  
In linear algebra we are accustomed to the concept of diagonalisable matrices.  
Similarly, we will call an operator  \emph{diagonalisable} if   
\beq   
{\sf T}=\sum_A\lambda_A\bot_A.   
\label{Tdiag}\eeq   
  
In differential geometry we are particularly interested in quantities like the Riemann  
curvature tensor, $R$. Now, the Riemann tensor naturally defines two curvature operators:  
namely, the Ricci operator ${\sf R}$, and the Weyl operator ${\sf C}$.  
Let us, at a point $p$, define the tensor algebra (or tensor concomitants) of 0th order curvature operators,  
$\mathcal{R}_0$,\footnote{Here, $\mathcal{R}$ refers to the Riemann tensor and index $0$ refers to the number of covariant derivatives. } as follows:  
\begin{defn}  
The set ${\mathcal R}_0$ is a $\mathbb{C}$-linear space of operators (endomorphisms) which satisfies the following properties: 
\begin{enumerate} 
\item{} Identity: the metric tensor, as an operator, is in  $\mathcal{R}_0$. 
\item{} the Riemann tensor, as an operator, is in $\mathcal{R}_0$. 
\item{} if ${\sf T}\in\mathcal{R}_0$, then any tensor contraction and any eigenvalue of ${\sf T}$ is also in $\mathcal{R}_0$. 
\item{} ``Square roots'': if $\tilde{{\sf T}}\in\mathcal{R}_0$ and $\tilde{ T}={ T}\otimes{ T}$ (as tensors) where $T$ is of even rank,  then ${\sf T}\in \mathcal{R}_0$ 
\item{} any operator that can be considered as a tensor product or expressible as functions of elements in $\mathcal{R}_0$ is also in $\mathcal{R}_0$.  
\end{enumerate}  
An element of $\mathcal{R}_0$, will be referred to as a \emph{curvature operator of order 0}.   
\end{defn}  
Similarly, allowing for contractions of the covariant derivative of the Riemann tensor, $\nabla R$,  
we can define the tensor algebra of 1st order curvature operators, $\mathcal{R}_1$, etc. If  
we allow for arbitrary number of derivatives, we will simply call it $\mathcal{R}$. Hence\footnote{As a $\mathbb{C}$-linear vector space, $\mathcal{R}_k$ will be finite dimensional, while $\mathcal{R}$ is infinite dimensional. However, in practice, it is sufficient to consider $\mathcal{R}_k$ for sufficiently large $k$ due to the fact that a finite number of derivatives of the Riemann tensor are sufficient to determine the spacetime up to isometry.}:  
\[ \mathcal{R}_0\subset\mathcal{R}_1\subset \cdots \subset \mathcal{R}_k\subset \mathcal{R}.\]  
For example, the following tensor:  
\[ R^{\mu\nu\alpha\beta}R_{\alpha\beta\rho\sigma}+45R^{\alpha\beta\gamma\delta;\mu\nu}R_{\alpha\beta\epsilon\eta}R^{\epsilon\eta}_{~\phantom{\epsilon\eta}\gamma\delta;\rho\sigma},\]  
is a curvature tensor of order 2 and is consequently an element of $\mathcal{R}_2$.   
  
Of particular interest is when all curvature operators can be determined using their invariants.  
This means that there is preferred set of operators (namely the projectors) which acts as  
a basis for the curvature operators. This is related to diagonalisability of operators, and hence:  
\begin{defn}  
$\mathcal{R}_k$ is called diagonalisable if there exists a set of projectors $\bot_A\in \mathcal{R}_k$  forming a tensor basis for $\mathcal{R}_k$; i.e., for every ${\sf T}\in \mathcal{R}_k$,   
\[{\sf T}=T^{AB...D}\bot_A\otimes\bot_B\otimes\cdots\otimes\bot_D.\]   
\end{defn}  
Note that the components may be complex. If an operator is diagonalisable, we can, by solving the characteristic equation,  
determine the expansion eq. (\ref{Tdiag}). This enables us to reconstruct the operator itself.  
In this sense, the operator would be characterised by its invariants. Hence, we have the following definition:  
\begin{defn} \label{def:char} 
A space $(\mathcal{M},g_{\mu\nu})$ is said to be \emph{characterised by its invariants} iff, for every $p\in \mathcal{M}$, the set of curvature operators, $\mathcal{R}$, is diagonalisable (over $\mb{C}$).   
\end{defn}  
As pointed out earlier we can further subdivide this category  into spaces that are characterised by their invariants in a strong or weak sense. If, in addition, the space $(\mathcal{M},g_{\mu\nu})$ is $\mathcal{I}$-non-degenerate, then the space is characterised by its invariants in the \emph{strong} sense; otherwise its said to be  characterised by its invariants in the \emph{weak} sense.  
 
A word of caution:  note that even though we know all the operators of a spacetime that is characterised by 
its invariants, we do not necessarily know the frame.  In particular, by raising an index of the metric 
tensor, $g_{\mu\nu}$, we get, irrespective of the signature, $g^{\mu}_{~\nu}=\delta^{\mu}_{~\nu}$.  This 
means that for a Lorentzian spacetime we would lose the information of which direction is time.  However, 
there might still be ways of determining which is time by inspection of the invariants (but this is no 
guarantee). If, say, an invariant can be written as $I=r^{\alpha}r_{\alpha}$, and $I<0$, then clearly, $r^\alpha$ is timelike and, consequently, can be used as time.  On the other hand, an example of when this cannot be done is the following case, in $d$ dimensions, when the only non-zero invariants 
are the 0th order Ricci invariants:   
\[ \Tr({\sf R})=R^{\mu}_{~\mu}=(d-1)\lambda, \quad \Tr({\sf 
R}^n)=R^{\mu_1}_{~\mu_2}R^{\mu_2}_{~\mu_3}\cdots R^{\mu_{n}}_{~\mu_1}=(d-1)\lambda^n.\] Then, if this 
spacetime is characterised by its invariants, we get:  \[ {\sf R}=\diag(0,\lambda,\lambda,\cdots,\lambda)\] 
However, there is no information in the invariants if the direction associated with the $0$ eigenvalue is a 
space-like or time-like direction.  Consequently, in the definition above we have to keep in mind that 
these spacetimes are characterised by their invariants, up to a possible ambiguity in which direction is 
associated with time, and which is space\footnote{Note that in the Riemannian signature case, there is no 
such ambiguity since all directions are necessarily space-like.}. 
 
In the following we will reserve the word ``spacetime'' to the Lorentzian-signature case, 
while ``Riemannian'' (space) will correspond to the Euclidean signature case.  For the Riemannian case, the operators provide us with an extremely simple proof of the following theorem:  
\begin{thm}\label{thm:riemannian}  
A Riemannian space is always characterised by its invariants.  
\end{thm}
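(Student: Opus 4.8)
The plan is to exploit the one feature that singles out the Euclidean case: positive-definiteness of the metric. Fix a point $p\in\mathcal{M}$. Since $g$ is positive definite there is an orthonormal frame $\{E_\alpha\}$ at $p$, in which $g_{\alpha\beta}=\delta_{\alpha\beta}$ and $g^{\alpha\beta}=\delta^{\alpha\beta}$; raising and lowering indices is then numerically the identity operation, so every curvature operator ${\sf T}\in\mathcal{R}$ has a \emph{real} matrix representation, the inner product induced by $g$ on each tensor power $V\subseteq(T_pM)^{\otimes k}$ is positive definite, and the metric-dual ${\sf T}^{*}={\sf T}^{T}$ is just the ordinary matrix transpose.

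Next I would observe that the curvature operators one needs are self-adjoint with respect to this positive-definite inner product. The Ricci operator is symmetric ($R_{\mu\nu}=R_{\nu\mu}$), the Riemann and Weyl operators on bivectors are symmetric by the pair symmetry of the Riemann/Weyl tensor, and, uniformly, for any curvature tensor $T$ (of even or odd rank, at any order) the tensor product $T\otimes T\in\mathcal{R}$ is, as an operator, automatically symmetric, since its $(I),(J)$ entry is $T_{(I)}T_{(J)}$. By the real spectral theorem a real symmetric matrix is diagonalisable over $\mathbb{R}$, a fortiori over $\mathbb{C}$, with mutually orthogonal eigenspaces; its eigenprojectors $\bot_A$ are real polynomials in ${\sf T}$ (Lagrange interpolation through the roots of the characteristic equation, which are themselves fixed by the polynomial invariants), hence lie in $\mathcal{R}$, sum to ${\sf 1}$, and satisfy $\bot_A\bot_B=\delta_{AB}\bot_A$. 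In the notation of (\ref{decomp}) this says ${\sf N}=0$ for every such operator, and passing to the common refinement of the eigenspace decompositions of all the curvature operators yields a set of projectors $\bot_A\in\mathcal{R}$ that serves as a tensor basis for $\mathcal{R}$. Thus $\mathcal{R}$ is diagonalisable over $\mathbb{C}$, and by Definition \ref{def:char} the space $(\mathcal{M},g_{\mu\nu})$ is characterised by its invariants. (No time/space ambiguity arises here, since in Euclidean signature every direction is space-like.)

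The step I expect to be the main obstacle is the passage from the self-adjoint \emph{generators} to \emph{all} of $\mathcal{R}$: a product of two non-commuting self-adjoint operators need not be symmetric and could, a priori, carry a nontrivial nilpotent part, so one has to check that the common projector family diagonalises every composite curvature operator and that no ${\sf N}$ survives anywhere in $\mathcal{R}$. The reason it does is, once more, positive-definiteness --- equivalently, the structure group at $p$ is the compact group $O(n)$, and compactness rules out precisely the boost-weight raising/lowering nilpotents that constitute ${\sf N}$ in indefinite signature (cf.\ the discussion of ${\sf N}$ following (\ref{decomp})). Making this last point fully rigorous --- e.g.\ via complete reducibility of $O(n)$-representations, or via the structure theory of finite-dimensional $C^{*}$-algebras, $\mathcal{R}_k$ being a unital, transpose-closed subalgebra of $\mathrm{End}_{\mathbb{C}}(V)$ carrying a positive-definite Hermitian form --- is the only non-routine ingredient; a parallel purely invariant-theoretic statement (the ring of $O(n)$-invariant polynomials separates $O(n)$-orbits on the collection of curvature tensors at $p$) would yield the theorem directly, but the operator argument above is the short proof being advertised.
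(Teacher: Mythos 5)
Your setup is the same as the paper's (orthonormal frame so that index position is immaterial, the dual operator is the matrix transpose, symmetric operators are diagonalisable by the spectral theorem with eigenprojectors that are polynomials in the operator and hence lie in $\mathcal{R}$), but the step you yourself flag as the main obstacle --- passing from self-adjoint generators to an arbitrary element of $\mathcal{R}$ --- is precisely the step the paper's proof disposes of, and it does so without compactness, complete reducibility or $C^*$-algebras. Given \emph{any} ${\sf T}\in\mathcal{R}$, write $T_{IJ}=T_{(IJ)}+T_{[IJ]}$. In Euclidean signature tensor (anti)symmetry coincides with matrix (anti)symmetry, so ${\sf S}=(S^I_{~J})$ is a symmetric matrix, diagonalisable with real eigenvalues, while ${\sf A}=(A^I_{~J})$ is antisymmetric, hence normal and diagonalisable over $\mathbb{C}$ with purely imaginary eigenvalues; both ${\sf S}$ and ${\sf A}$ are again curvature operators. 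Thus every element of $\mathcal{R}$, symmetric or not, is recovered from diagonalisable elements of $\mathcal{R}$ through their eigenvalues and projectors, which is what Definition \ref{def:char} requires. Your argument never treats the antisymmetric part at all, and it leaves the general (non-symmetric) case to be settled by machinery you do not actually deploy.

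Moreover, the statement you propose to establish in order to close that gap --- that ``no ${\sf N}$ survives anywhere in $\mathcal{R}$'', i.e.\ that every composite curvature operator is matrix-diagonalisable --- is stronger than what is true and cannot be rescued by compactness. A product of two real symmetric matrices can be a nonzero nilpotent, e.g.\ $\diag(1,0)\left(\begin{smallmatrix}0&1\\1&0\end{smallmatrix}\right)$ is the $2\times2$ Jordan nilpotent, and nothing prevents such products from occurring among curvature operators of a Riemannian space; so the intended route of showing that every element of $\mathcal{R}$ is diagonalisable would fail. What positive-definiteness actually buys is exactly the symmetric/antisymmetric split above (this is also what breaks in Lorentzian signature, as the paper notes: there $T_{(IJ)}$ and $T_{[IJ]}$ are no longer symmetric/antisymmetric \emph{as operators}), or alternatively the invariant-theoretic fact you mention in passing (polynomial invariants separate $O(n)$-orbits), which is the group-theoretic proof the paper attributes to the literature. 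As written, your proposal proves the theorem only for the self-adjoint operators and leaves the decisive extension open, with a closing strategy that is not viable in the form stated.
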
  
\begin{proof}  
In an orthonormal frame, the Riemannian metric is $g_{\mu\nu}=\delta_{\mu\nu}$. Therefore, the components  
of an operator $T^I_{~J}$ will be the same as those of $T_{IJ}$, possibly up to an overall constant. Therefore, let us decompose into the symmetric, 
$S_{IJ}$, and antisymmetric part $A_{IJ}$, of $T_{IJ}$:   
\[ T_{IJ}=S_{IJ}+A_{IJ}, \quad S_{IJ}\equiv T_{(IJ)}, \quad A_{IJ}\equiv T_{[IJ]}.\]  
Since the signature is Euclidean, the operators $S^{I}_{~J}$ and $A^{I}_{~J}$ will also be symmetric and  
antisymmetric (as matrices), respectively. A standard result is that we can consequently always diagonalise ${\sf S}$:    
\beq  
{\sf S}=\diag(\lambda_1,\lambda_2,\cdots,\lambda_N);  
\eeq  
i.e., the operator ${\sf S}$ is diagonalisable. For ${\sf A}$:   
\beq  
{\sf A}=\mathrm{blockdiag}\left(\begin{bmatrix} 0 & a_1 \\ -a_1 & 0 \end{bmatrix},\begin{bmatrix} 0 & a_2 \\ -a_2 & 0 \end{bmatrix},\cdots,\begin{bmatrix} 0 & a_k \\ -a_k & 0 \end{bmatrix}, 0,\cdots, 0  \right)  
\eeq  
Consequently, ${\sf A}$ is diagonalisable also (with purely imaginary eigenvalues).   
  
So for ${\sf T}\in\mathcal{R}$, then also ${\sf S}\in\mathcal{R}$ and ${\sf A}\in\mathcal{R}$, and therefore any curvature operator is characterised by its invariants. This proves the theorem.  
\end{proof}  
 
We note that this theorem also follows from a group-theoretical perspective 
\cite{PTV,Procesi}; however, the 
operators provide an alternative (and fairly straight-forward) proof. 
 
Of course, it is known that a Lorentzian spacetime is not necessarily characterised by its invariants 
in any dimension.\footnote{In the proof of theorem \ref{thm:riemannian} it can be seen that what goes wrong is that $S_{IJ}$ and $A_{IJ}$ are not necessarily symmetric or antisymmetric as operators $S^I_{~J}$ and $A^I_{~J}$. Therefore, the Jacobi canonical forms are really needed in this case.}   
However, the spacetimes being characterised by their curvature invariants play a special role for certain  
classes of metrics. For example, we have that:  
\begin{prop}[VSI spacetimes]  
Among the spacetimes of vanishing scalar curvature invariants (VSI), the only spacetimes being characterised by their curvature invariants (weakly or strongly) are locally isometric to flat space.  
\end{prop}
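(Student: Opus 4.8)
The plan is to use the defining feature of a VSI spacetime --- that \emph{every} scalar polynomial curvature invariant vanishes --- to force all eigenvalues of all curvature operators to be zero, and then to combine this with diagonalisability. First I would observe that for any curvature operator ${\sf T}$ obtained polynomially from the Riemann tensor and its covariant derivatives --- in particular the Riemann operator ${\sf Riem}=(R^{\mu\nu}_{~~\alpha\beta})$ acting on bivectors, and the operators built from $\nabla^{(k)}R$ --- the traces $\Tr({\sf T}^n)$ are scalar polynomial curvature invariants, hence members of the set $\mathcal{I}$. On a VSI spacetime all of these vanish, so by Newton's identities the characteristic polynomial of ${\sf T}$ is $\lambda^{\dim V}$; thus $0$ is the only eigenvalue and ${\sf T}$ is nilpotent. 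Equivalently, in the decomposition (\ref{decomp}) the single eigenvalue $\lambda=0$ occurs, its curvature projector is the identity, and ${\sf T}={\sf N}$.

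Next I would invoke the hypothesis that the spacetime is characterised by its invariants, weakly or strongly; by Definition \ref{def:char} this means $\mathcal{R}$ is diagonalisable, so every ${\sf T}\in\mathcal{R}$ is diagonalisable in the sense of (\ref{Tdiag}), i.e.\ ${\sf N}={\sf 0}$ in its decomposition. Combined with the previous step, ${\sf T}={\sf N}={\sf 0}$: every curvature operator on a characterised VSI spacetime vanishes. In particular ${\sf Riem}={\sf 0}$, so $R^{\mu\nu}_{~~\alpha\beta}=0$ at every point and the Riemann tensor vanishes identically; by the classical rigidity of flat connections, $(\mathcal{M},g_{\mu\nu})$ is then locally isometric to the flat (pseudo-)Euclidean space of the given signature. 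The converse inclusion is immediate, since flat space is VSI and has only vanishing --- hence trivially diagonalisable --- curvature operators, so it is characterised by its invariants, necessarily in the weak sense as it is not $\mathcal{I}$-non-degenerate.

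The step I expect to require the most care is the passage from the algebra-level statement that $\mathcal{R}$ admits a common projector basis to the element-level statement that a given nilpotent curvature operator, such as ${\sf Riem}$ under the VSI condition, is itself diagonalisable and therefore zero. I would make this explicit by noting that once ${\sf T}=\sum_A\lambda_A\bot_A$ with the $\bot_A$ mutually orthogonal and summing to ${\sf 1}$, all powers are simultaneously diagonal, $({\sf T})^n=\sum_A\lambda_A^n\bot_A$, so nilpotency rules out any nonzero $\lambda_A$. A secondary point is that $\mathcal{R}_0$ is closed under ``square roots'' and under passing to eigenvalues, so one should check these operations introduce no non-nilpotent curvature operators beyond those containing a bare metric factor: this is harmless, since if $\tilde{\sf T}={\sf T}\otimes{\sf T}$ is nilpotent then so is ${\sf T}$, an eigenvalue of a nilpotent operator is $0$, and the metric-identity pieces are irrelevant to reconstructing the curvature. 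Once ${\sf Riem}$ is shown to vanish, no further input (for instance about the operators built from $\nabla^{(k)}R$) is required.
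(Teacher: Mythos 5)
Your proof is correct and is essentially the argument the paper's machinery implies (the paper states this proposition without writing out a proof): the vanishing of all $\Tr({\sf T}^n)$ forces every eigenvalue of every curvature operator to vanish, diagonalisability then eliminates the nilpotent part ${\sf N}$, so every operator --- in particular the Riemann operator --- is zero, and a vanishing Riemann tensor gives local flatness, with the converse (flat space being weakly characterised) immediate. The point you flag about passing from the projector-basis definition of a diagonalisable $\mathcal{R}$ to element-wise diagonalisability is well placed but unproblematic, since the $\bot_A$ are commuting idempotents and hence any combination of their tensor products is a diagonalisable operator.
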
  
Note that flat space is a characterised by its invariants only in the weak sense.   
 
\begin{prop}[CSI spacetimes]  
If a spacetime has constant scalar curvature invariants (CSI) and is characterised by its invariants (weakly or strongly), then it is a locally homogeneous space.  
\end{prop}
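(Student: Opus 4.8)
The plan is to combine the hypothesis of being characterised by the invariants (which, by Definition~\ref{def:char}, means pointwise diagonalisability of all the curvature operators) with the CSI hypothesis to produce, at every point, a canonical frame in which the Riemann tensor and \emph{all} of its covariant derivatives have the \emph{same constant} components; then I would invoke Cartan's equivalence principle, as recalled in the Introduction, to conclude that any two points are related by a local isometry, so that the space is locally homogeneous. This runs parallel to the known results on CSI spacetimes, with the operator calculus of Sections~2--3 doing the bookkeeping.

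First I would fix $p\in\mathcal{M}$. Since $(\mathcal{M},g_{\mu\nu})$ is characterised by its invariants, each $\mathcal{R}_k$ is diagonalisable, so there is a family of commuting projectors $\{\bot_A\}$ on $T_pM$ with ${\sf 1}=\sum_A\bot_A$ and $\bot_A\bot_B=\delta_{AB}\bot_A$ such that every curvature operator ${\sf T}$ (on any tensor power) is ${\sf T}=T^{AB\cdots D}\bot_A\otimes\bot_B\otimes\cdots\otimes\bot_D$. The key observation is that the coefficients $T^{AB\cdots D}$ are themselves curvature invariants: they are recovered from the traces $\Tr\bigl({\sf T}\,(\bot_A\otimes\cdots\otimes\bot_D)\bigr)$ divided by the integers $\Tr(\bot_A)\cdots\Tr(\bot_D)$, and each such trace is a polynomial invariant of curvature operators, since every $\bot_A$ is, by its construction in Section~3, a polynomial combination of curvature operators and their (invariant) eigenvalues $\lambda_A$. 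In particular the $\lambda_A$, being roots of a characteristic polynomial with invariant coefficients, are curvature invariants. I would then choose a frame $\{E_\alpha\}$ of $T_pM$ adapted to the common eigenspace decomposition of the order-$0$ projectors $\bot_A$ on $T_pM$ (using, where a pair of complex-conjugate eigenvalues occurs, the real $2\times2$-block normalisation exactly as for the operator ${\sf A}$ in the proof of Theorem~\ref{thm:riemannian}); in the induced frame on the tensor powers every $\bot_A\otimes\cdots\otimes\bot_D$ is a standard idempotent, so the matrix of every curvature operator is completely fixed by the invariants $T^{AB\cdots D}$. Applying the ``square root'' property (item~4 in the definition of $\mathcal{R}_0$, and its $\mathcal{R}_k$-analogue) to operators built from $\nabla^{(j)}R\otimes\nabla^{(j)}R$ and their contractions then recovers, in this same frame, the components $R_{\alpha\beta\gamma\delta},\,R_{\alpha\beta\gamma\delta;\mu_1},\,\dots$ as fixed functions of curvature invariants, it being enough to control these up to the finite order at which Cartan's procedure closes.

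Now I would use CSI: every scalar polynomial curvature invariant is constant on $\mathcal{M}$, hence so are all the derived invariants $\lambda_A$ and $T^{AB\cdots D}$, and therefore the canonical-frame components of $R,\nabla R,\dots,\nabla^{(k)}R$ take the same values at every point. For any two points $p,q\in\mathcal{M}$ the assignment sending the canonical frame at $p$, together with its curvature data, to a canonical frame at $q$ with the identical data satisfies the hypotheses of the equivalence-principle statement of the Introduction; hence there is a local isometry taking $p$ to $q$. Since $p$ and $q$ are arbitrary, $(\mathcal{M},g_{\mu\nu})$ is locally homogeneous.

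The part I expect to be the main obstacle is not the linear algebra but making this reconstruction watertight on three counts: (i) that a single frame of $T_pM$ can be used to put \emph{all} curvature operators simultaneously into projector-adapted form (this is precisely where the tensor-basis clause of diagonalisability is essential); (ii) that the derivative components $\nabla^{(j)}R$, recovered only through even-rank operators and the ``square root'' property, are pinned down up to the residual isotropy group and not merely up to sign or reordering of blocks; and (iii) that the complex projectors and eigenvalues, together with the signature ambiguity inherent in the operator description (noted before Theorem~\ref{thm:riemannian}: raising an index of $g_{\mu\nu}$ always gives $\delta^\mu_{~\nu}$), can be packaged so that the isometries produced are genuine real diffeomorphisms of the fixed metric. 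A final technical point to address is that the canonical frame can be chosen to depend smoothly on the point, or alternatively that a pointwise comparison of the invariant data suffices, so that the equivalence principle applies in the form stated.
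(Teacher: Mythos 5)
Your strategy is genuinely different from the paper's, but as it stands it is a plan with an unfilled hole rather than a proof, and the hole is exactly where you flag it. The argument hinges on the claim that, at each point, diagonalisability of $\mathcal{R}$ yields a frame in which the components $R_{\alpha\beta\gamma\delta}, R_{\alpha\beta\gamma\delta;\mu_1},\dots$ are \emph{fixed functions of the scalar invariants}. What the definition of being characterised by invariants actually gives you is the reconstruction of each even-ranked \emph{operator} as $\sum T^{AB\cdots D}\bot_A\otimes\cdots\otimes\bot_D$ with invariant coefficients (your trace formula for $T^{AB\cdots D}$ is fine); it does not by itself give a single frame, varying smoothly with the point, adapted simultaneously to all these operators, nor does it pin down the odd-rank data. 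Recovering $\nabla^{(j)}R$ from operators built out of $\nabla^{(j)}R\otimes\nabla^{(j)}R$ leaves sign/ordering ambiguities and residual freedom inside each eigenspace (the isotropy of the projector decomposition), and you must show that this freedom can be used to make the components at $p$ and $q$ literally coincide before the equivalence-principle statement applies; there is also the time/space ambiguity the paper itself warns about after Definition \ref{def:char}. You list (i)--(iii) as ``main obstacles'' without resolving them, but they are not technical finishing touches: they are the entire content of the step from ``operators determined by invariants'' to ``Cartan invariants constant,'' so the proposal does not yet establish the proposition.

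The paper's proof sidesteps all of this with the operator calculus. Since projectors are Lie-transported along \emph{any} vector field, a diagonalisable operator satisfies $\pounds_{\mbold\xi}{\sf T}=0$ iff ${\mbold\xi}(\lambda_A)=0$, and this yields the corollary that in a space characterised by its invariants any vector field annihilating all $I_i\in\mathcal{I}$ coincides at each point with a Killing vector field. Under the CSI hypothesis every invariant is constant, so the coordinate vector fields ${\mbold\xi}_k=\partial_k$ trivially annihilate all invariants; the resulting Killing vectors are linearly independent at each point and span the tangent space, hence the space is locally homogeneous. If you want to pursue your route, the missing lemma you would have to prove is precisely that for spaces satisfying Definition \ref{def:char} the frame components of $R,\nabla R,\dots$ up to the Karlhede termination order are determined, up to the orthogonal frame freedom, by the scalar invariants --- a statement substantially stronger than the diagonalisability you invoke.
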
  
These locally homogeneous spacetimes can be characterised by their invariants in either a weak or strong sense.  
 
Regarding the question of which spacetimes are characterised by its invariants, in the sense defined above, we have that the following conjecture is true:  
\begin{con}  
If a spacetime is characterised by its invariants (weakly or strongly), then it is either  
 $\mathcal{I}$-non-degenerate, or of type D to all orders; i.e., type D$^k$.  
\end{con}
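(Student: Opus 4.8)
The plan is to combine the Lorentzian dichotomy of \cite{inv} with the boost-weight structure of degenerate Kundt spacetimes, and then to use diagonalisability of the curvature operators to force that structure to be type D at every order. If $(\mathcal{M},g)$ is $\mathcal{I}$-non-degenerate there is nothing to prove, so I assume throughout that $(\mathcal{M},g)$ is characterised by its invariants but is \emph{not} $\mathcal{I}$-non-degenerate. In four dimensions the theorem of \cite{inv} then forces $(\mathcal{M},g)$ to be a degenerate Kundt spacetime (in higher dimensions one invokes the corresponding, presently conjectural, dichotomy \cite{Kundt}). Recall that a degenerate Kundt spacetime admits a Kundt null congruence $\ell$ which is an aligned null direction of $R$ and of all its covariant derivatives (in particular, a multiple WAND of the Weyl tensor); equivalently, there is a null frame $(\ell,n,m_i)$ in which $R,\nabla R,\nabla^{(2)}R,\dots$ all have vanishing positive boost-weight components. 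Passing to the induced frame on $V=(T_pM)^{\otimes k}$, every curvature operator ${\sf T}\in\mathcal{R}$ is then block lower-triangular with respect to the boost-weight grading $V=\bigoplus_b V_{(b)}$: one has ${\sf T}={\sf T}_{(0)}+{\sf N}_{\sf T}$, with ${\sf T}_{(0)}$ the boost-weight-$0$ part and ${\sf N}_{\sf T}$ the strictly negative boost-weight part, which is automatically nilpotent since negative boost-weight operators on $V$ compose to zero after finitely many steps.

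The next step is to upgrade the hypothesis. Since $(\mathcal{M},g)$ is characterised by its invariants, every ${\sf T}\in\mathcal{R}$ is diagonalisable and hence has vanishing nilpotent part in the decomposition (\ref{decomp}). I want to conclude that in fact \emph{all} negative boost-weight components of $R,\nabla R,\dots$ vanish, i.e.\ that $(\mathcal{M},g)$ is of type D to all orders (type D$^k$). This does not follow from ${\sf N}_{\sf T}=0$ for each individual ${\sf T}$ alone --- ${\sf T}_{(0)}$ and ${\sf N}_{\sf T}$ need not commute, and a block lower-triangular operator with distinct diagonal eigenvalues can be diagonalisable with a nonzero strictly-lower block --- but for a degenerate Kundt spacetime this escape route is blocked. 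At zeroth order the mechanism is transparent: the Weyl operator ${\sf C}$ is diagonalisable precisely for Petrov types I, D and O, while an aligned but non-type-D Weyl tensor is genuinely of Petrov type II, III or N and so yields a ${\sf C}$ with a nontrivial Jordan block --- a contradiction; hence $R$ is type D at zeroth order. For the higher orders one argues the same way: if some $\nabla^{(m)}R$ carried a nonzero negative boost-weight component, then, using the index and Bianchi symmetries, one builds from $\nabla^{(m)}R$ (contracted with itself and, if needed, with the metric and the already-controlled lower-order operators) a curvature operator ${\sf T}\in\mathcal{R}$ in which the nonvanishing negative boost-weight coupling necessarily runs between subspaces of ${\sf T}_{(0)}$ carrying a common eigenvalue --- the vanishing of the mirror positive boost-weight block, forced by alignment with $\ell$, is exactly what prevents those eigenvalues from being split apart --- so that ${\sf T}$ has a genuine Jordan block and is not diagonalisable. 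This contradiction shows every $\nabla^{(k)}R$ is purely boost weight $0$ in the frame of $\ell$, which is the statement that $(\mathcal{M},g)$ is of type D to all orders.

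The main obstacle is the last clause of the previous step: proving that a genuinely type II (aligned but not type D) degenerate Kundt spacetime always carries \emph{some} non-diagonalisable curvature operator at the relevant order. This amounts to ruling out accidental configurations in which every negative boost-weight coupling happens to join eigenspaces of ${\sf T}_{(0)}$ belonging to distinct eigenvalues, which would not obstruct diagonalisability. In four dimensions this is a finite problem --- the Weyl bivector operator is $6\times 6$, the Ricci operator $4\times 4$, and the algebraic (Segre and Weyl) types relevant to Kundt metrics were enumerated in \cite{inv} --- so the argument can be completed case by case. In higher dimensions the same strategy applies conceptually, using the bivector form of the Weyl operator \cite{bivect} and alignment theory \cite{class}, but the required classification of boost-weight types of $R$ and $\nabla^{(k)}R$ for degenerate Kundt metrics is not complete, which is precisely why the statement is recorded here as a conjecture rather than as a theorem.
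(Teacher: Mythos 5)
There is no ``paper's own proof'' to compare against here: the statement you were given is recorded in the paper as a \emph{conjecture}. The authors offer only supporting evidence --- that $\mathcal{I}$-non-degenerate metrics appear to be (strongly) characterised by their invariants via the results of \cite{inv}, and that type D$^k$ spacetimes are (weakly) characterised by their invariants via \cite{shortinv} --- and they state explicitly that ``it remains to prove that these are all.'' So your proposal has to be judged as an attempted proof of an open statement.

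As such, it is the natural line of attack (reduce, via the $\mathcal{I}$-non-degenerate versus degenerate Kundt dichotomy of \cite{inv}, to a degenerate Kundt metric whose curvature tensors have no positive boost-weight parts, then try to use diagonalisability of the operators in $\mathcal{R}$ to force the negative boost-weight parts to vanish), and you correctly identify where the difficulty sits: a block-lower-triangular operator can be diagonalisable when the off-diagonal couplings only join distinct eigenvalues of the boost-weight-zero part, so ${\sf N}_{\sf T}=0$ in the decomposition (\ref{decomp}) for every ${\sf T}\in\mathcal{R}$ does not immediately give type D at each order. But the step that would close this loophole --- that a degenerate Kundt spacetime which is genuinely of type II, III or N at some order always produces \emph{some} non-diagonalisable operator in $\mathcal{R}$, i.e.\ that the ``accidental'' eigenvalue-splitting configurations can always be excluded --- is exactly the content of the conjecture, and you do not establish it: at zeroth order you treat only the Weyl operator (leaving Ricci and mixed/derivative operators aside), at higher orders you assert a case-by-case 4D argument without carrying out any case, and you concede the required classification is absent in higher dimensions (where, moreover, the Kundt dichotomy you invoke is itself only conjectural). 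A further unaddressed point is that type D$^k$ demands a \emph{single} null frame in which $R$ and all $\nabla^{(m)}R$ simultaneously have only boost-weight-zero components; even if your per-order contradiction were established, one still has to show the second null direction can be chosen uniformly across all orders. So what you have is a sensible programme with its central lemma missing, which is consistent with the statement remaining a conjecture in the paper rather than a theorem.
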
  
There are several results that support this conjecture.  First, it seems reasonable that all 
$\mathcal{I}$-non-degenerate metrics are (strongly) characterised by their invariants.  In 4 
dimensions, this seems clear from the results of \cite{inv}.  Furthermore, that type D$^k$ spacetimes 
are also characterised by its invariants (in a weak sense necessarily) follows from the results of \cite{shortinv}.  However, it 
remains to prove that these are all. 
 
We should also point out a sublety in the definition of what we mean by characterised by the invariants.  
Now a class of metrics that, in general, does not seem to be characterised by their curvature invariants  
is the subclass of Kundt metrics \cite{inv,Kundt}:   
\[ \d s^2=2\d u[\d v+v^2H(u,x^k)\d u+vW_i(u,x^k)\d x^i]+g_{ij}(u,x^k)\d x^i \d x^j. \]  
For sufficiently general functions $H$, $W_i$, and $g_{ij}$, these metrics  
will not be characterised by their invariants as defined above because they will be of type II  
to all orders (in three dimensions, see the example below). However, the invariants still enable  
us to reconstruct the metric. In some sense, the metric above is the simplest metric with this set of  
invariants (however, it is not characterised by its invariants as defined above!). 
\paragraph{Example: 3D degenerate Kundt metrics.} Let use study the details of the above example  
in three dimensions, but generalising to the full class of degenerate Kundt metrics, which can be written:  
\[ \d s^2=2\d u[\d v+H(v,u,x)\d u+W(v,u,x)\d x]+\d x^2, \]  
where $H=v^2H^{(2)}(u,x)+vH^{(1)}(u,x)+H^{(0)}(u,x)$ and  $W=vW^{(1)}(u,x)+W^{(0)}(u,x)$.  
In 3D the Weyl tensor is zero and so we only need to consider the Ricci tensor which, using  
the coordinate basis $(u,v,x)$, can be written in operator form:   
\beq  
{\sf R}=\begin{bmatrix}   
\lambda_1 & 0 & 0 \\  
R_{uu} & \lambda_1 & R_{ux} \\  
R_{ux} & 0 & \lambda_3  
\end{bmatrix}.  
\eeq  
The eigenvalues are $\lambda_1, \lambda_1, \lambda_3$; consequently,  
if $\lambda_1\neq \lambda_3$ the Segre type is $\{21\}$ or $\{(1,1)1\}$. Unfortunately,  
the coordinate basis is not the canonical Segre basis so the form is not manifest. However,  
the projection operators are frame independent operators so let us find the various projectors  
in this case. Since $\lambda_1$ has multiplicity 2, (while $\lambda_3$ has multiplicity 1) the  
operator $({\sf R}-\lambda_1{\sf 1})^2$ must be proportional to the projection operator $\bot_3$. By ordinary matrix multiplication we get:  
\beq  
({\sf R}-\lambda_1{\sf 1})^2=\begin{bmatrix}   
0 & 0 & 0 \\  
R_{ux}^2 & 0 & (\lambda_3-\lambda_1)R_{ux} \\  
(\lambda_3-\lambda_1)R_{ux} & 0 &(\lambda_3-\lambda_1)^2  
\end{bmatrix}.  
\eeq  
The proportionality constant is $(\lambda_3-\lambda_1)^2$. Consequently:  
\[ \bot_3=\begin{bmatrix}   
0 & 0 & 0 \\  
\tfrac{R_{ux}^2}{(\lambda_3-\lambda_1)^2} & 0 & \tfrac{R_{ux}}{(\lambda_3-\lambda_1)} \\  
\tfrac{R_{ux}}{(\lambda_3-\lambda_1)} & 0 & 1  
\end{bmatrix}.  
\]  
We note that the projection operator is not diagonal in the coordinate basis: however, we can easily verify 
that $\bot_3^2=\bot_3$. 
 
There are only two projection operators in this case, so $\bot_1$ can be calculated  
using $\bot_1={\sf 1}-\bot_3$. Alternatively, we can define  
$\widetilde{\bot}_1\equiv(\lambda_1-\lambda_3)^{-1}({\sf R}-\lambda_3{\sf 1})$ so that the projection  
operator is given by $\bot_1={\sf 1}-({\sf 1}-\widetilde{\bot}_1)^2$. Using either  
way to calculate $\bot_1$, we obtain:  
 \[ \bot_1=\begin{bmatrix}   
1 & 0 & 0 \\  
-\tfrac{R_{ux}^2}{(\lambda_3-\lambda_1)^2} & 1 & -\tfrac{R_{ux}}{(\lambda_3-\lambda_1)} \\  
-\tfrac{R_{ux}}{(\lambda_3-\lambda_1)} & 0 & 0  
\end{bmatrix}.  
\]  
To check which Segre type the Ricci tensor has we calculate the nilpotent operator given by the   
expansion eq. (\ref{decomp}):  
\beq  
{\sf N}={\sf R}-\lambda_1\bot_1-\lambda_3\bot_3=\begin{bmatrix}   
0 & 0 & 0 \\  
R_{uu}-\tfrac{R_{ux}^2}{(\lambda_3-\lambda_1)} & 0 & 0 \\  
0 & 0 & 0  
\end{bmatrix}.  
\eeq  
Therefore, the metric is Segre type $\{21\}$ in general, while if $R_{uu}=\tfrac{R_{ux}^2}{(\lambda_3-\lambda_1)}$ then it is Segre type $\{(1,1)1\}$.   
  
For the above metric we have   
\beq  
\lambda_1&=& 2H^{(2)}+\frac{1}{2}\frac{\partial W^{(1)}}{\partial x}-\frac 12 (W^{(1)})^2, \\  
\lambda_3&=& \frac{\partial W^{(1)}}{\partial x}-\frac 12 (W^{(1)})^2.   
\eeq  
The invariants will only depend on $H^{(2)}$ and $W^{(1)}$,  
and the invariants will determine these functions up to diffeomorphisms. However,  
in general, this metric will be of Segre type $\{2 1\}$; therefore, even if $W^{(0)}=H^{(1)}=H^{(0)}$,  
this metric will not be determined by the curvature invariants, as defined above.  
  
\section{The operator calculus}   
Since all of the results so far are entirely point-wise, it is also an advantage to consider  
the operator calculus; i.e. derivatives of operators. Most useful for our purposes is the Lie derivative.  
The Lie derivative preserves the order and type of tensors and is thus particularly useful.   
  
Consider a vector field ${\mbold\xi}$ defined on a neighbourhood $U$. We must assume that \emph{the operator decomposition, (\ref{decomp2}), does not change over $U$.} This assumption is essential in what follows.   
  
Consider the one-parameter group of diffeomorphisms, $\phi_t$, generated by the vector field ${\mbold\xi}$. Assume that two points $p$ and $\hat{p}$ (both in $U$) are connected via $\phi_t$; i.e., $\hat{p}=\phi_t(p)$. Since the eigenvalues are scalar functions over $U$, then $\phi_t^*(\lambda)=\hat{\lambda}$. Furthermore, by assumption, the eigenvalue structure of ${\sf T}$ does not change over $U$; consequently, for a eigenvector ${\sf v}$ with eigenvalue $\lambda$ we have:  
\[ \phi^*_t({\sf T}{\sf v}-\lambda{\sf v})=\hat{\sf T}\hat{\sf v}-\hat{\lambda}\hat{\sf v}={\sf T}{\sf v}-\lambda{\sf v}=0.\]   
This implies that eigenvectors are mapped onto eigenvectors. Therefore, if the eigenvalue  
$\lambda$ is mapped onto $\hat{\lambda}$ (these are scalar functions determined by the characteristic equation), then the eigenvector ${\sf v}$ is mapped onto the corresponding eigenvector $\hat{\sf v}$. Then if ${\bf e}_I$ spans the eigenvectors of eigenvalue $\lambda$, and $\hat{\bf e}_I$ spans the eigenvectors with eigenvalue $\hat{\lambda}$, then there must exist, since $\phi_t$ also preserves the norm, an invertible matrix $M^I_{~J}$ so that:  
\[ \hat{\bf e}_J=M^I_{~J}{\bf e}_I.\]   
If ${\mbold\omega}^I$ and $\hat{\mbold\omega}^I$ are the corresponding one-forms, then $\hat{\mbold\omega}_J=(M^{-1})^I_{~J}{\mbold\omega}_I$.  
For a projector, the eigenvalues are $\lambda = 0,1$; therefore, we can write  
\[ \bot = \delta^{J}_{~I}{\bf e}_J\otimes {\mbold\omega}^I\]   
where the eigenvectors ${\bf e}_J$ have eigenvalue 1.   
Therefore:  
\[ \phi^*_t(\bot)\equiv\hat{\bot}_t=\bot.\]   
Consequently,   
\[ \pounds_{\mbold\xi}\bot = 0.\]   
Therefore, we have the remarkable property that the projectors are Lie transported with respect to  
any vector field ${\mbold\xi}$.   
  
For the Lie derivative we thus have the following result:  
\begin{thm}  
Assume that the operator, ${\sf T}$, has the decomposition (\ref{decomp2}). Then, the Lie derivative with respect to ${\mbold\xi}$ is  
\[ \pounds_{\mbold\xi}{\sf T}=\sum_A\left[\pounds_{\mbold\xi}{\sf N}_A+{\mbold\xi}\left(\lambda_A\right)\bot_A\right], \quad \text{where}\quad \pounds_{\mbold\xi}{\sf N}_A=\bot_A(\pounds_{\mbold\xi}{\sf N})\bot_A. \]  
\end{thm}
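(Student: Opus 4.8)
The plan is to differentiate the orthogonal decomposition (\ref{decomp2}) term by term, using that the Lie derivative $\pounds_{\mbold\xi}$ is a derivation on the tensor algebra which commutes with contractions, together with the two facts already established above: the eigenvalues are scalar functions on $U$, so that $\pounds_{\mbold\xi}\lambda_A=\mbold\xi(\lambda_A)$, and the curvature projectors are Lie transported along \emph{every} vector field, $\pounds_{\mbold\xi}\bot_A=0$.

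First, by $\mathbb{C}$-linearity of $\pounds_{\mbold\xi}$ applied to (\ref{decomp2}),
\[ \pounds_{\mbold\xi}{\sf T}=\sum_A\left(\pounds_{\mbold\xi}{\sf N}_A+\pounds_{\mbold\xi}(\lambda_A\bot_A)\right). \]
For the second summand, the Leibniz rule for the product of the scalar $\lambda_A$ with the operator $\bot_A$ gives $\pounds_{\mbold\xi}(\lambda_A\bot_A)=(\pounds_{\mbold\xi}\lambda_A)\bot_A+\lambda_A\,\pounds_{\mbold\xi}\bot_A=\mbold\xi(\lambda_A)\bot_A$, since $\pounds_{\mbold\xi}\bot_A=0$. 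For the first summand, recall ${\sf N}_A=\bot_A{\sf N}\bot_A$, which is an operator composition, i.e. a contraction of $\bot_A\otimes{\sf N}\otimes\bot_A$. Since $\pounds_{\mbold\xi}$ commutes with contractions and obeys Leibniz,
\[ \pounds_{\mbold\xi}{\sf N}_A=(\pounds_{\mbold\xi}\bot_A){\sf N}\bot_A+\bot_A(\pounds_{\mbold\xi}{\sf N})\bot_A+\bot_A{\sf N}(\pounds_{\mbold\xi}\bot_A)=\bot_A(\pounds_{\mbold\xi}{\sf N})\bot_A, \]
again using $\pounds_{\mbold\xi}\bot_A=0$. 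Substituting the two identities into the displayed sum yields exactly the claimed formula. As a consistency check, summing the last identity over $A$ and using ${\sf N}=\sum_A{\sf N}_A$ together with $\bot_A\bot_B=\delta_{AB}\bot_A$ reproduces $\pounds_{\mbold\xi}{\sf N}=\sum_A\bot_A(\pounds_{\mbold\xi}{\sf N})\bot_A$, which is then automatic.

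The only genuine subtlety — and the place where the standing assumption on the neighbourhood $U$ does the work — is that the projectors $\bot_A$ and the pieces ${\sf N}_A$ must themselves be smooth tensor fields on $U$ for these Lie derivatives to be defined. This is guaranteed by the hypothesis that the operator decomposition (\ref{decomp2}) does not change over $U$: the characteristic polynomial of ${\sf T}$ then has roots of locally constant multiplicity, and the projectors, being built polynomially from ${\sf T}$ and the $\lambda_A$ as in the construction preceding (\ref{decomp}), vary smoothly; likewise ${\sf N}={\sf T}-\sum_A\lambda_A\bot_A$ and ${\sf N}_A=\bot_A{\sf N}\bot_A$. Once this regularity is in place, the computation above is purely formal, so I expect no further obstacle: the entire content of the theorem is the bookkeeping of the Leibniz rule applied to (\ref{decomp2}) together with the previously proven identity $\pounds_{\mbold\xi}\bot_A=0$.
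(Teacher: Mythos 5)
Your proof is correct and follows essentially the same route as the paper: apply the Leibniz rule to the decomposition (\ref{decomp2}), use $\pounds_{\mbold\xi}\lambda_A={\mbold\xi}(\lambda_A)$ and the previously established fact $\pounds_{\mbold\xi}\bot_A=0$ to kill the projector terms, and read off $\pounds_{\mbold\xi}{\sf N}_A=\bot_A(\pounds_{\mbold\xi}{\sf N})\bot_A$. The only cosmetic difference is that the paper computes $\pounds_{\mbold\xi}(\bot_A{\sf N}\bot_B)$ for general $A,B$ and then uses $\bot_A{\sf N}\bot_B=\delta_{AB}{\sf N}_A$, while you work with ${\sf N}_A$ directly; your added smoothness remark is a reasonable elaboration of the paper's standing assumption on $U$.
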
  
\begin{proof}  
We have  
\[ \pounds_{\mbold\xi}\left(\sum_A \lambda_A\bot_A\right)=\sum_A\left[\left(\pounds_{\mbold\xi}\lambda_A\right)\bot_A+\lambda_A\pounds_{\mbold\xi}(\bot_A)\right]=\sum_A{\mbold\xi}(\lambda_A)\bot_A, \]   
and   
\beq \pounds_{\mbold\xi}\left(\bot_A{\sf N}\bot_B\right)& =&\left(\pounds_{\mbold\xi}\bot_A\right){\sf N}\bot_B+\bot_A\left(\pounds_{\mbold\xi}{\sf N}\right)\bot_B+\bot_A{\sf N}\left(\pounds_{\mbold\xi}{\bot_B}\right)\nonumber \\  &=&\bot_A\left(\pounds_{\mbold\xi}{\sf N}\right)\bot_B\nonumber.   
\eeq   
Since  $\bot_A{\sf N}\bot_B=\delta_{AB}{\sf N}_A$, the theorem now follows.  
\end{proof}  
Now clearly, this has some interesting consequences. If, for example, the operator ${\sf T}$ is diagonalisable, i.e., ${\sf N}=0$, then ${\sf T}$ is Lie transported if and only if the eigenvalues are Lie transported. This can formulated as follows:  
\begin{cor}   
If an operator, {\sf T}, is diagonalisable, then, for a vector field, ${\mbold\xi}$,    
\[ \pounds_{\mbold\xi}{\sf T}=0\quad \Leftrightarrow\quad {\mbold\xi}(\lambda_A)=0\]   
\end{cor}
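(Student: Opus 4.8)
The plan is to derive the corollary directly from the theorem on the Lie derivative just proved. Since $\mathsf{T}$ is diagonalisable, by definition (equation \eqref{Tdiag}) we have $\mathsf{N}=0$, hence every $\mathsf{N}_A=\bot_A\mathsf{N}\bot_A=0$, and so each term $\pounds_{\mbold\xi}\mathsf{N}_A=\bot_A(\pounds_{\mbold\xi}\mathsf{N})\bot_A$ vanishes as well. Substituting into the formula of the theorem, the Lie derivative collapses to
\[ \pounds_{\mbold\xi}\mathsf{T}=\sum_A{\mbold\xi}(\lambda_A)\bot_A. \]

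First I would establish the implication $\Leftarrow$: if ${\mbold\xi}(\lambda_A)=0$ for every $A$, then every term on the right-hand side above vanishes, so $\pounds_{\mbold\xi}\mathsf{T}=0$ immediately. For the converse $\Rightarrow$, I would use the orthogonality and completeness of the projectors. Suppose $\pounds_{\mbold\xi}\mathsf{T}=0$, i.e. $\sum_A{\mbold\xi}(\lambda_A)\bot_A=0$. Multiplying this relation on the left and right by a fixed projector $\bot_B$ and using $\bot_A\bot_B=\delta_{AB}\bot_A$ (the relation displayed just before \eqref{decomp}), all terms with $A\neq B$ drop out and we are left with ${\mbold\xi}(\lambda_B)\bot_B=0$. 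Since $\bot_B\neq 0$ (it is a genuine projector onto a nontrivial eigenspace, the $\Lambda_A\neq 0$ condition guaranteeing this in the construction), we conclude ${\mbold\xi}(\lambda_B)=0$, and since $B$ was arbitrary this holds for all eigenvalues.

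One subtlety worth flagging, rather than a genuine obstacle: the whole argument rests on the standing assumption made in this section, namely that \emph{the operator decomposition \eqref{decomp2} does not change over $U$} — in particular that the number of distinct eigenvalues and the dimensions of the eigenspaces are constant, so that the projectors $\bot_A$ are well-defined smooth operator fields and $\pounds_{\mbold\xi}\bot_A=0$ holds throughout $U$. I would mention explicitly that this is inherited from the hypotheses of the theorem being invoked, so no new assumption is needed. Given that, the proof is a two-line consequence of the theorem and the projector algebra; there is no hard step. The one thing to be careful about is not to over-claim: the eigenvalues $\lambda_A$ here are the scalar eigenvalue-functions, and ${\mbold\xi}(\lambda_A)$ denotes the directional derivative of that scalar along ${\mbold\xi}$, which coincides with $\pounds_{\mbold\xi}\lambda_A$ since $\lambda_A$ is a function.
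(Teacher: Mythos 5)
Your proof is correct and follows the same route the paper intends: set ${\sf N}=0$ in the Lie-derivative theorem so that $\pounds_{\mbold\xi}{\sf T}=\sum_A{\mbold\xi}(\lambda_A)\bot_A$, with the equivalence then read off from the projector algebra. Your explicit use of $\bot_A\bot_B=\delta_{AB}\bot_A$ for the forward direction, and your remark that the standing assumption of constant algebraic form over $U$ is what makes ${\sf N}=0$ (and hence $\pounds_{\mbold\xi}{\sf N}=0$) hold throughout, are exactly the details the paper leaves implicit.
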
  
\begin{cor}  
If a spacetime $(\mathcal{M},g_{\mu\nu})$ is characterised by its invariants (weakly or strongly), then if there exists  
a vector field, ${\mbold\xi}$, such that 
\[ {\mbold\xi}(I_i)=0. \quad \forall I_i\in \mathcal{I},\]    
then there exists a set, $\mathcal{K}$, of  Killing vector fields such that at any point the vector field ${\mbold\xi}$ coinsides with a Killing vector field $\tilde{\mbold\xi}\in \mathcal{K}$.  
\end{cor}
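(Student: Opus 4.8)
The plan is to combine the previous corollary with Cartan's equivalence principle. First I would pass to a connected open set $U$ on which the operator decomposition (\ref{decomp2}) of the relevant curvature operators is constant (such $U$ exist around a dense set of points, and the exceptional set on which the eigenvalue structure jumps is lower-dimensional). On $U$, since $(\mc{M},g_{\mu\nu})$ is characterised by its invariants every ${\sf T}\in\mc{R}$ is diagonalisable, and since $\mbold{\xi}(I_i)=0$ for all $I_i\in\mc{I}$, each eigenvalue $\lambda_A$ of each such ${\sf T}$ — being a root of the characteristic polynomial, whose coefficients are polynomial invariants — satisfies $\mbold{\xi}(\lambda_A)=0$. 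Applying the previous corollary to every ${\sf T}\in\mc{R}$ then yields $\pounds_{\mbold{\xi}}{\sf T}=0$ for every curvature operator; equivalently, the flow $\phi_t$ of $\mbold{\xi}$ obeys $\phi_t^{*}{\sf T}={\sf T}$ and $\phi_t^{*}\bot_A=\bot_A$ for all curvature operators and all their projectors.

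The second step is to upgrade the flow $\phi_t$ to a local isometry. Fix $p$ and set $q=\phi_t(p)$. Since $\phi_t^{*}\bot_A=\bot_A$, the eigenspace sub-bundles of all curvature operators are Lie transported, and intersecting them over all operators in $\mc{R}$ distinguishes, at each point, a flag of subspaces (and, in the generic case of simple joint spectrum, an orthonormal frame $\{E_\alpha\}$ up to a discrete reflection/permutation ambiguity). In that generic case $\phi_t$ carries the distinguished frame at $p$ to that at $q$; both being orthonormal, $\phi_t$ preserves $g$, so $\mbold{\xi}$ is in fact Killing. In general I would argue instead that $\phi_t^{*}g$ is a metric on $U_p$ whose curvature operators, to all orders, coincide with those of $g$; because $g$ is characterised by its invariants, diagonalisability lets one reconstruct all the curvature tensors from the operators up to an orthonormal change of frame, so there is an orthonormal frame $\bar E_\alpha$ at $p$ in which the curvature components of $\phi_t^{*}g$ equal those of $g$ in a chosen orthonormal frame $E_\alpha$. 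Cartan's equivalence principle, as quoted in the Introduction, then produces a local isometry $\psi_t:U_p\to U_q$ with $\psi_t(p)=q=\phi_t(p)$, depending smoothly on $t$ and with $\psi_0=\mathrm{id}$.

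The final step is to differentiate in $t$. Since $\psi_t^{*}g=g$ for all small $t$, setting $\tilde{\mbold\xi}\equiv\left.\tfrac{\d}{\d t}\right|_{t=0}\psi_t$ gives $0=\left.\tfrac{\d}{\d t}\right|_{0}\psi_t^{*}g=\pounds_{\tilde{\mbold\xi}}g$, so $\tilde{\mbold\xi}$ is a Killing vector field, while $\tilde{\mbold\xi}|_p=\left.\tfrac{\d}{\d t}\right|_{0}\psi_t(p)=\left.\tfrac{\d}{\d t}\right|_{0}\phi_t(p)=\mbold{\xi}|_p$. Note that $\psi_t\neq\phi_t$ in general (they agree only at $p$), so $\tilde{\mbold\xi}$ need not equal $\mbold{\xi}$ away from $p$ — which is exactly why the conclusion is the pointwise one rather than "$\mbold{\xi}$ is Killing". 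Letting $p$ range over $\mc{M}$ and collecting these Killing fields (a subset of the finite-dimensional Killing algebra) into $\mc{K}$ proves the statement on $U$; on the exceptional lower-dimensional set the conclusion then follows by continuity, again using finite-dimensionality of the space of Killing fields.

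The main obstacle is the bridge in the second step: showing that two metrics sharing the same curvature operators to all orders — when those operators are diagonalisable — necessarily admit matching orthonormal frames in which \emph{all} curvature components (Riemann and its covariant derivatives) agree, so that Cartan's principle applies. This is precisely where the hypothesis "characterised by its invariants" has to be used in its full strength rather than just through the Lie-transport of projectors; the remaining points (smooth dependence of $\psi_t$ on $t$ from Cartan's construction, and the behaviour on the set where the eigenvalue structure degenerates) I expect to be routine by comparison.
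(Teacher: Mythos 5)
Your argument is correct and is essentially the route the paper intends: the corollary is stated there without an explicit proof, as an immediate consequence of the preceding theorem and corollary (constancy of the invariants along $\mbold{\xi}$ forces $\mbold{\xi}(\lambda_A)=0$ and hence $\pounds_{\mbold\xi}{\sf T}=0$ for every diagonalisable curvature operator) combined with Cartan's equivalence principle, which is exactly your chain of reasoning. The ``bridge'' you flag --- passing from matching operators to matching curvature components in orthonormal frames so that Cartan's principle applies --- is precisely what the paper's definition of ``characterised by its invariants'' is meant to supply (the curvature tensors are reconstructible from the invariants, with the residual discrete ambiguities, e.g.\ which eigendirection is timelike, fixed along the flow by continuity from $\psi_0=\mathrm{id}$), so your proof is, if anything, more careful than the paper's own treatment.
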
  
  
The last corollary enables us to reduce the question of Killing vectors down to the existence of vectors annihilating all the curvature invariants $I_i$. This may be easier in some cases if the form of the metric is totally unmanageable.   
  
\subsection*{Examples}  
  
\paragraph{CSI spacetimes:}  
  
Consider now a spacetime which has all constant scalar curvature invariants \cite{CSI,3CSI,4CSI}. Assume also that the spacetime is characterised by its invariants, weakly or strongly. Let us now see how we can give an alternative proof that this must be locally homogeneous.   
  
We choose, at any point $p$, a local coordinate system $\{x^k\}$. Using the coordinate frame, we can define ${\mbold\xi}_k=\partial_k$ we get:   
\[ {\mbold\xi}_k(I_i)=0, \quad \forall I_i\in\mathcal{I}.\]  
Therefore, for each ${\mbold\xi}_k$ (which are linearly independent at $p$), there would be a Killing vector field $\tilde{\mbold\xi}_k$ coinsiding with ${\mbold\xi}_k$ at $p$. Therefore, the Killing vectors  are also linearly independent and span the tangent space, consequently, the spacetime is \emph{locally homogeneous}.   
  
\paragraph{Kinnersley class I vacuum metrics:}  
Let us consider the Kinnersley class I Petrov type D vacuum metric \cite{Kinnersley} which is a $\mathcal{I}$-non-degenerate metric \cite{typeD}. The Cartan invariants can all be reconstructed from the 4 (complex) scalar polynomial invariants:  
\begin{eqnarray}  
I &=& \frac 12\Psi_{abcd}\Psi^{abcd}=3(2Cil+m)^2 z^{-6}, \nonumber \\        
C^{\alpha\beta\gamma\delta}C_{\alpha\beta\gamma\delta}&=& 24(2Cil+m)^2 z^{-6} +24(-2Cil+m)^2 z^{*-6}\nonumber \\  
{\Psi}^{(abcd;e)f'} {\Psi}_{(abcd;e)f'}  &=& 180(2Cil+m)^2 Sz^{-8}\nonumber \\                          
C^{\alpha\beta\gamma\delta;\mu}C_{\alpha\beta\gamma\delta;\mu}&=& 720(2Cil+m)^2 Sz^{-8} +720(-2Cil+m)^2 Sz^{*-8}\nonumber,   
\end{eqnarray}  
where $\Psi_{abcd}$ is the Weyl spinor, $a,b,..$ are spinor indices, and $\alpha, \beta,...$ are  
frame indices. Here, $C$,  $m$ and $l$ are all constants while all of the functions ($S$ etc.) 
depend only on the coordinate $r$.   
  
Since the invariants only depends on one variable, namely $r$, this spacetime possesses (at least) 3 transitive Killing vectors at any point.  
  

\section{Analytic metric continuation}  
  
The operators also have another property that seems to be very useful.  The metric $g_{\mu\nu}$ does not 
appear explicity in the analysis above and only appears after raising an index:  $g^{\mu}_{~\nu}=\delta^{\mu}_{~\nu}$. 
Consequently, this analysis is independent of the signature of the metric.  In the eigenvalue equation the identity 
operator ${\sf 1}$ would therefore be independent of the signature also.  We can thus consider what happens 
under more general coordinate transformations than the real diffeomorphisms preserving the metric 
signature.  Let us therefore consider complex analytic continuations of the real metric of this form. 
 
Consider a point $p$ and a neighbourhood, $U$ at $p$, and we will assume this nighbourhood is an analytic 
neighbourhood and that $x^{\mu}$ are coordinates on $U$ so that $x^{\mu}\in \mathbb{R}^n$.  We will adapt 
the coordinates so that the point $p$ is at the origin of this coordinate system.  Consider now the 
complexification of $x^{\mu}\mapsto x^{\mu}+iy^{\mu}=z^{\mu}\in \mathbb{C}^n$.  This complexification 
enables us to consider the complex analytic neighbourhood $U^{\mathbb C}$ of $p$. 
 
Furthermore, let $g_{\mu\nu}^{\mathbb C}$ be a complex bilinear form induced by the analytic extension of the metric:   
\[ g_{\mu\nu}(x^{\rho})\d x^\mu\d x^{\nu}\mapsto  g^{\mathbb C}_{\mu\nu}(z^{\rho})\d z^\mu\d z^{\nu}.\]   
Consider now a real analytic submanifold containing $p$: $\bar{U}\subset U^{\mathbb C}$ with coordinates $\bar{x}^\mu\in \mathbb{R}^n$. The imbedding $\iota:\bar{U}\mapsto U^{\mathbb C}$ enables us to pull back the complexified metric $g^{\mathbb C}$ onto $\bar{U}$:  
\beq  
\bar{g}\equiv \iota^*{g}^{\mathbb{C}}.   
\eeq   
In terms of the coordinates $\bar{x}^\mu$:  $\bar{g}=\bar{g}_{\mu\nu}(\bar{x}^\rho)\d \bar{x}^\mu\d 
\bar{x}^{\nu}$.  This bilinear form may or may not be real.  However, \emph{if the bilinear form 
$\bar{g}_{\mu\nu}(\bar{x}^\rho)\d \bar{x}^\mu\d \bar{x}^{\nu}$ is real (and non-degenerate) then we will 
call it an analytic metric extension  of $g_{\mu\nu}(x^{\rho})\d x^\mu\d x^{\nu}$ with respect to $p$.} 
 
In the following, let us call the analytic metric extension for $\bar{\phi}$; i.e., $\bar{\phi}:U\mapsto \bar{U}$. We note that this transformation is complex, and we can assume, since $U$ is real analytic, that $\bar{\phi}$ is analytic.  
  
The analytic metric continuation leaves the point $p$ stationary. Therefore, it induces a linear  
transformation, $M$, between the tangent spaces $T_pU$ and $T_p\bar{U}$. The transformation $M$ is  
complex and therefore may change the metric signature; consequently, even if the metric  
$\bar{g}_{\mu\nu}$ is real, it does not necessarily need to have the same signature as $g_{\mu\nu}$.  
  
Consider now the curvature tensors, $R$ and $\nabla^{(k)}R$ for $g_{\mu\nu}$, and $\bar{R}$ and 
$\bar{\nabla}^{(k)}\bar{R}$ for $\bar{g}_{\mu\nu}$.  Since both metrics are real, their curvature tensors 
also have to be real.  The analytic metric continuation induces a linear transformation of 
the tangent spaces; consequently, this would relate the Riemann tensors $R$ and $\bar{R}$ through a complex 
linear transformation.  However, how are these related? 
 
By using $\bar{\phi}$ we can relate the metrics $g=\bar{\phi}^*\bar{g}$.  Since the map is analytic (albeit 
complex), the curvature tensors are also related via $\bar\phi$.  For the operators this has a very useful 
consequence.  First, we note that \emph{scalar polynomial invariants are invariant under 
$GL(\mathbb{C},n)$}; therefore, the eigenvalues of ${\sf R}$ and $\bar{\sf R}$ are identical at $p$. 
Consequently, the eigenspace decomposition is identical also.  (This seems almost too remarkable to be 
true).  Over the neigbourhoods $\bar{U}$ and $U$ we can, in general, relate the eigenvalues: 
$\lambda_{A}=\bar\phi^*\bar\lambda_A$. 
 
Since, a Riemannian space is always characterised by its invariants, we immediately have the following result:   
\begin{thm} \label{analytic} 
Assume that a space, $(\mathcal{M},g_{\mu\nu})$, of any signature can be analytically continued, in the sense above, to a Riemannian space (of Euclidean signature). Then the spacetime is characterised by its invariants in either a weak or strong sense.   
\end{thm}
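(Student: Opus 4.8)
The plan is to transport the diagonalisability established in Theorem~\ref{thm:riemannian} for the Riemannian extension $\bar g$ back to $g$ along the analytic metric extension $\bar\phi$, exploiting the fact (noted in the text preceding the statement) that the scalar polynomial invariants, and hence the eigenvalues of every curvature operator, are $GL(n,\mathbb{C})$-invariant. First I would fix a point $p$ and record that $\bar\phi$ fixes $p$ and induces a $\mathbb{C}$-linear isomorphism $M:T_pU\to T_p\bar U$, which extends naturally to an isomorphism $\Gamma(M)$ of each tensor power $(T_pM)^{\otimes k}$ and of the subspaces cut out by index symmetries. Because $g=\bar\phi^*\bar g$ and $\bar\phi$ is analytic, the Riemann tensor of $g$ and all of its covariant derivatives at $p$ are the $\Gamma(M)$-pullbacks of the corresponding tensors for $\bar g$; and since the inverse metric transforms contravariantly, a short index computation shows that each curvature operator ${\sf T}\in\mathcal{R}$ built from $g$ equals the conjugate $\Gamma(M)^{-1}\bar{\sf T}\,\Gamma(M)$ of the analogous operator for $\bar g$. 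More is true: every operation used to generate $\mathcal{R}$ from the metric and the Riemann tensor (contractions, extraction of eigenvalues, ``square roots'', tensor products, functions of operators) is compatible with conjugation by $\Gamma(M)$ --- contractions with respect to $g$ correspond under $M$ to contractions with respect to $\bar g$, and both metrics are the identity as operators --- so $\mathcal{R}(g)=\Gamma(M)^{-1}\,\mathcal{R}(\bar g)\,\Gamma(M)$ as algebras of operators at $p$.

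Next I would invoke Theorem~\ref{thm:riemannian}: the Riemannian space $(\bar U,\bar g)$ is characterised by its invariants, so by Definition~\ref{def:char} the algebra $\mathcal{R}(\bar g)$ is diagonalisable, i.e.\ there is a family of projectors $\bar\bot_A\in\mathcal{R}(\bar g)$ forming a tensor basis for $\mathcal{R}(\bar g)$. Setting $\bot_A\equiv\Gamma(M)^{-1}\bar\bot_A\,\Gamma(M)$, conjugation by an invertible operator preserves idempotency, mutual orthogonality and completeness, so the $\bot_A$ are again projectors with $\sum_A\bot_A={\sf 1}$ and $\bot_A\bot_B=\delta_{AB}\bot_A$; and since $\mathcal{R}(g)$ is exactly the $\Gamma(M)$-conjugate of $\mathcal{R}(\bar g)$, the $\bot_A$ lie in $\mathcal{R}(g)$ and form a tensor basis for it. Equivalently, for a single ${\sf T}\in\mathcal{R}(g)$ one can argue directly: ${\sf T}=\Gamma(M)^{-1}\bar{\sf T}\Gamma(M)$ has the same minimal polynomial as $\bar{\sf T}$, which has distinct roots, so ${\sf T}$ is diagonalisable over $\mathbb{C}$, and its projectors --- constructed from ${\sf T}$ and its eigenvalues exactly as in the projector construction above --- belong to $\mathcal{R}(g)$. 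Hence $\mathcal{R}(g)$ is diagonalisable at $p$; as $p\in\mathcal{M}$ was arbitrary, $(\mathcal{M},g_{\mu\nu})$ is characterised by its invariants, and whether this holds in the strong or weak sense is then settled by whether or not $g$ is $\mathcal{I}$-non-degenerate.

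The step I expect to be the main obstacle is the second half of the first paragraph: verifying carefully that the whole algebra $\mathcal{R}(g)$ --- not merely the zeroth-order operators but all operators built with arbitrarily many covariant derivatives --- is obtained from $\mathcal{R}(\bar g)$ by the single conjugation $\Gamma(M)$. This requires that the analytic extension, being genuinely analytic rather than merely smooth, intertwines the Levi-Civita connections of $g$ and $\bar g$, so that $\bar\phi^*(\bar\nabla^{(k)}\bar R)=\nabla^{(k)}R$ at $p$, and that the defining operations of $\mathcal{R}$ involving the metric (index raising and lowering, contractions) remain compatible with $M$ despite $M$ being complex and possibly changing the signature. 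Once this compatibility is in hand, the transport of diagonalisability is purely formal linear algebra.
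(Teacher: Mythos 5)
Your proposal is correct and follows essentially the same route as the paper: the paper's ``proof'' is precisely the informal paragraph preceding the theorem, which relates the curvature tensors of $g$ and $\bar g$ through the complex linear transformation induced by the analytic map $\bar\phi$, notes that scalar polynomial invariants (hence eigenvalues and the eigenspace structure of the curvature operators) are $GL(n,\mathbb{C})$-invariant, and then imports diagonalisability from Theorem~\ref{thm:riemannian}. Your write-up merely makes explicit the conjugation ${\sf T}=\Gamma(M)^{-1}\bar{\sf T}\,\Gamma(M)$ and the transport of projectors that the paper leaves implicit, and your flagged ``main obstacle'' (that analyticity intertwines the connections so all $\nabla^{(k)}R$ are related by the same map) is exactly the point the paper asserts without further detail.
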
  
  
Note that the result is actually stronger than this because not only is it diagonalisable, if we have a 
symmetric tensor giving rise to an operator (for example ${\sf R}$) then, since the Riemannian space must 
have a symmetric operator $\bar{\sf R}$, the eigenvalues are real.  Consequently, the eigenvalues at $p$ 
must also be real. 
 
Interestingly, the reverse of the above theorem is just as useful:   
\begin{cor}  
If a spacetime is not characterised by its invariants (weakly or strongly), then there exists no analytical continuation of it to a Riemannian space.   
\end{cor}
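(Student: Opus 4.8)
The statement is the contrapositive of Theorem~\ref{analytic}, so the plan is essentially to invoke that theorem logically. First I would recall the precise logical content of Theorem~\ref{analytic}: the hypothesis ``$(\mathcal{M},g_{\mu\nu})$ admits an analytic metric extension (in the sense defined above) to a Riemannian space'' implies the conclusion ``$(\mathcal{M},g_{\mu\nu})$ is characterised by its invariants (weakly or strongly)''. The Corollary asserts the contrapositive: if the conclusion fails — i.e.\ the spacetime is \emph{not} characterised by its invariants in either sense — then the hypothesis must fail, i.e.\ there exists no analytic continuation of it to a Riemannian space. So the core of the proof is a single line: apply the contrapositive of Theorem~\ref{analytic}.

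The one point that deserves a sentence of care is the quantifier structure hidden in Definition~\ref{def:char} and in the phrasing of Theorem~\ref{analytic}. ``Characterised by its invariants'' is a pointwise condition (diagonalisability of $\mathcal{R}$ at every $p\in\mathcal{M}$), so ``not characterised by its invariants'' means there is \emph{some} point $p_0\in\mathcal{M}$ at which $\mathcal{R}$ fails to be diagonalisable over $\mathbb{C}$. I would then observe that if an analytic metric extension to a Riemannian space existed, it could in particular be performed with respect to that bad point $p_0$ (the extension in the text is defined ``with respect to $p$''), and Theorem~\ref{analytic} would then force $\mathcal{R}$ to be diagonalisable at $p_0$ — since the eigenvalue/eigenspace decomposition at $p_0$ is identified, via the complex linear map $M$, with that of the Riemannian curvature operators, which are diagonalisable by Theorem~\ref{thm:riemannian}. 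This contradicts the choice of $p_0$, completing the argument.

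The main (and only) obstacle is therefore not a computation but a matter of bookkeeping: making sure the non-existence claim is read with the correct quantifier — ``no analytic continuation to a Riemannian space'' should be understood as ``no such continuation with respect to any point,'' and in particular none with respect to the point(s) where diagonalisability fails. Once that reading is fixed, the proof is immediate: suppose such a continuation existed; by Theorem~\ref{analytic} the space would be characterised by its invariants; this contradicts the hypothesis of the Corollary; hence no such continuation exists. I would write this out in three or four lines with no further machinery invoked.
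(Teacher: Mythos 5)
Your proof is correct and is exactly the paper's (implicit) argument: the corollary is simply the contrapositive of Theorem~\ref{analytic}, which the paper states without further proof. Your extra remark about applying the extension at a point where diagonalisability fails is a harmless clarification, not a different approach.
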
  
  
In the above we have restricted to analytic continuations $\bar{\phi}$ leaving a point $p$ fixed. In general, one can consider that does not necessarily leave a point fixed. However, for such complex mappings, we need to be a bit careful with the radius of convergence. On the other hand, assuming convergence, the invariants, $I_i$, can still be related via $\bar{\phi}^*\bar{I}_i=I_i$.  Recall that we are restricting ourselves to neighbourhoods  where operators do not change algebraic form.

For spacetime metrics of algebraically special types there are also some useful 'no-go' theorems for Lorentzian manifolds:\footnote{Note that type D is by definition excluded from this Corollary.}
\begin{cor}
If a spacetime is of (proper) Weyl, Ricci or Riemann type N, III or II, then there exists no  analytical continuation of it to a Riemannian space.   
\end{cor}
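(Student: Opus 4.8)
The plan is to reduce the claim for Weyl, Ricci, or Riemann type N, III, or II to a statement about the nilpotent part $\sf N$ of the associated curvature operator ($\sf C$, $\sf R$, or a Riemann-type operator $\mathcal{R}_0$), and then invoke Corollary (the reverse of Theorem~\ref{analytic}): if the space admitted an analytic continuation to a Riemannian space it would be characterised by its invariants, hence all curvature operators would be diagonalisable, i.e.\ $\sf N = 0$ in the decomposition \eqref{decomp}. So it suffices to show that for a proper type N, III, or II operator the nilpotent part $\sf N$ is nonzero, which contradicts diagonalisability. First I would recall from the alignment/bivector classification \cite{class,bivect} that a tensor is of type I (or more special $\mathrm{G}$, but those are genuinely type I here) precisely when all negative boost-weight components relative to some aligned null frame can be removed, whereas types II, III, N have a nonvanishing component of boost weight $\le -1$ that cannot be boosted away; the word ``proper'' in the statement is exactly the hypothesis that the operator is \emph{not} of a more special (diagonalisable) type such as D or O.

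The key step is to translate ``proper type N/III/II'' into ``$\sf N \neq 0$''. For a boost-weight decomposition the operator takes the schematic form ${\sf T} = \sum_b {\sf T}_{(b)}$ with $b$ the boost weight; the eigenvalue-carrying, boost-weight-zero piece sits in the diagonalisable part, and the strictly negative (and, after using the trace/symmetry structure, strictly positive) boost-weight pieces are nilpotent and feed into $\sf N$. Concretely: in a null frame adapted to the alignment type, a type II operator has a block structure whose boost-weight $-1$ or $-2$ entries are strictly below the ``diagonal'' of eigenvalue blocks, so $({\sf T} - \sum_A \lambda_A \bot_A)$ has a nonzero strictly-lower-triangular part; for type III and N there are even fewer nonzero positive-boost-weight blocks so the operator is itself nilpotent modulo its (possibly zero) eigenvalues, and again $\sf N \neq 0$ because the operator is not of the still more degenerate type O. This is essentially the observation already made after \eqref{decomp}: ``if ${\sf N}\neq 0$, then ${\sf N}$ is a negative/positive boost weight operator''; I would just run that implication in the contrapositive direction — a genuinely boost-weight-negative component forces $\sf N \neq 0$, hence non-diagonalisability.

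Having established $\sf N \neq 0$ for each of the three types, the conclusion is immediate: by Definition~\ref{def:char} the space is characterised by its invariants iff every curvature operator in $\mathcal{R}$ is diagonalisable over $\mb C$; a proper type N, III, or II Weyl/Ricci/Riemann operator lies in $\mathcal{R}_0 \subset \mathcal{R}$ and is not diagonalisable; therefore the space is not characterised by its invariants, and by the preceding Corollary it admits no analytic continuation to a Riemannian space. I expect the main obstacle to be making precise, in a signature-independent and frame-independent way, the claim that ``proper type II/III/N $\Rightarrow$ the nilpotent part of the operator built from that tensor is nonzero'' — one must be careful that the Weyl (or Ricci, or Riemann) \emph{operator}, as opposed to the tensor, genuinely inherits the off-diagonal boost-weight structure, and that ``proper'' rules out exactly the coincidences (eigenvalue degeneracies conspiring with the nilpotent blocks, or reduction to type D or O) that would otherwise allow $\sf N = 0$; type D is flagged as excluded precisely because a type D operator \emph{is} diagonalisable. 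This is the only place where the argument needs the detailed bivector normal forms of \cite{bivect}; everything else is a direct appeal to Theorem~\ref{analytic} and its corollary.
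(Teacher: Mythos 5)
Your proposal is correct and follows essentially the same route as the paper: the paper's own proof likewise rests on the two facts that a (proper) type N, III or II Weyl/Ricci/Riemann operator is not diagonalisable, while every symmetric operator of a Riemannian space is, so no analytic continuation can relate them. Your extra detour through the corollary of Theorem~\ref{analytic} (non-characterisation by invariants $\Rightarrow$ no continuation) and the boost-weight justification that ${\sf N}\neq 0$ only spell out steps the paper leaves implicit.
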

\begin{proof}

This follows from the fact that \emph{any} symmetric operator of a Riemannian space is diagonalisable. Since the Weyl, Ricci and Riemann operator of type N, III and II are not diagonalisable, the corollary follows. 
\end{proof}
There are also a bundle of similar results that follows from a similar analysis.

\subsection*{Signature and convention}  
Since analytic metric continuation can change the signature, we may be in a 
situation in which we are comparing invariants of spacetimes of different signatures.   
Let us note a word of 
caution in this regard. 
 
Here, we have (implicitly) assumed that the Riemann tensor $R^{\alpha}_{~\beta\gamma\delta}$ is given in a 
coordinate basis with the standard formula involving the metric and the Christoffel symbols.  This formula 
is invariant under an overall change of signature; for example, $(++\cdots +)$ is the same as $(--\cdots 
-)$.  Consequently, the Ricci tensor, $R_{\mu\nu}=\bar{R}_{\mu\nu}$ also.  However, the sign matters for 
the Ricci scalar defined as $R=g^{\mu\nu}R_{\mu\nu}$ where we get $R=-\bar{R}$.  Therefore, even if the 
invariants may change with a sign, the overall signature is just a convention.  This is also evident from 
entities constructed from a non-zero curvature ``vector'' $r^{\alpha}$.  The norm of this vector is an 
invariant $I=r^{\alpha}r_{\alpha}$; however, in signature $(++\cdots +)$ this is necessarily positive, 
$I>0$; while signature $(--\cdots -)$ this is necessarily negative $I<0$. 
 
It is therefore crucial when comparing invariants through an analytic continuation that we specify the signature. Albeit it is conventional to say that metrics with an overall change of signature are indentical. However, the invariants of such metrics may change by a sign. 

Let us be more precise. Consider an overall change of signature: $g_{\mu\nu}\mapsto \epsilon g_{\mu\nu}$, where $\epsilon =\pm 1$. Using a coordinate basis, we see that the Christoffel symbols $\Gamma^{\alpha}_{~\mu\nu}$ do not depend on $\epsilon$. Consequently, the Riemann tensor, $R^{\alpha}_{~\beta\mu\nu}$ does not depend on $\epsilon$ either. Moveover, we can also see that the definition of the covariant derivatives, $R^{\alpha}_{~\beta\mu\nu;\lambda_1...\lambda_k}$, do not depend on $\epsilon$. However, \emph{raising} or \emph{lowering} indices introduce $\epsilon$. Therefore, depending on whether you raise/lower an odd or even number of indicies to create an invariant, the invariant will depend (if odd) or not depend (if even) on $\epsilon$. We can therefore split the invariants into invariants that depend on $\epsilon$, denoted $ I^\epsilon_i$, and those that do not, denoted $I_j$. Since \emph{the choice of overall sign is merely a convention} we make the following identification: 
\beq 
(g_{\mu\nu},~I^\epsilon_i, ~I_j)\sim  (-g_{\mu\nu},~-I^\epsilon_i, ~I_j).
\eeq
Note that in the neutral signature case changing the overall sign does not change the signature; consequently, the above map gives a non-trivial equivalence relation between invariants of neutral signature metrics. 
\subsection*{Examples}  
\paragraph{Kasner universe:}  
In 1921 E. Kasner \cite{Kasner} wrote down a Ricci flat metric of Riemannian signature (in dimension 4 but we 
present it in $n$ dimension):   
\beq  
\d s^2=\d t^2+\sum_{i=1}^nt^{2p_i}(\d x^i)^2, \qquad \sum_{i=1}^{n}p_i=\sum_{i=1}^np_i^2=1.  
\label{Kasner}\eeq  
Clearly, this has $n$ Killing vectors, $\frac{\partial}{\partial x^i}$, and therefore the  
invariants can only depend on $t$: $I=I(t)$. Furthermore, since this is a Riemannian space,  
this space is characterised by its invariants.   
  
There is a number of analytic metric continuations possible using $\bar{x}^j=ix^j$, where $i$ is the imaginary unit and $x^j$ is any of the coordinates except $t$. The so-called Kasner Universe is the Lorentzian version obtained by $(\bar{x}^1,\cdots,\bar{x}^n)=(ix^1,\cdots,ix^n)$ of signature $(+--\cdots -)$. Consequently, the invariants of this spacetime are identical to those of Riemannian version of signature $(+++\cdots+)$. In particular, this means that curvature invariants constructed from the norm of the gradient of an invariant, must be non-negative; i.e., $(\nabla_{\alpha}I)(\nabla^\alpha I)\geq 0$.   
  
Usually, the Lorentzian version is written with signature $(-++\cdots+)$, which can be obtained from  
the $(---\cdots-)$ version; consequently,    $(\nabla_{\alpha}I)(\nabla^\alpha I)\leq 0$.  
\paragraph{Schwarzschild spacetime:} The 4D Schwarzschild spacetime is given by  
\beq  
\d s^2=-\left(1-\frac{2m}r\right)\d t^2+\frac{\d r^2}{\left(1-\frac{2m}r\right)}+r^2(\d \theta^2+\sin^2\theta \d \phi^2).  
\eeq   
There exists an analytic metric continuation (not necessarily unique) to a Euclidean signature $(++++)$ given by  
\[ (\bar{t},r,{\theta},\phi)=(it,r,\theta,\phi).\]  
This shows that the Schwarzschild spacetime is characterised by its invariants. Furthermore, the invariants are idential to the invariants given by the Euclidean version. Note that there is also a Lorentzian $(+---)$ version obtained by $\bar{\theta}=i\theta$ with identical invariants. This illustrates the subtleties in the difference of signatures and the invariants.   
  
In 5D the Schwarzschild spacetime is given by   
\beq  
\d s^2=-\left(1-\frac{2m}{r^2}\right)\d t^2+\frac{\d r^2}{\left(1-\frac{2m}{r^2}\right)}+r^2(\d \theta^2+\sin^2\theta \d \Omega^2).  
\eeq   
Here, there is a similar continuation, $\bar{t}=it$ that relates the $(-++++)$ version with the  
Euclidean $(+++++)$. Interestingly, in 5D we can also do the continuation $\bar{r}=ir$ which turns it into a Euclidean space of signature $(-----)$:  
\[  
\d s^2=-\left(1+\frac{2m}{\bar{r}^2}\right)\d t^2-\frac{\d \bar{r}^2}{\left(1+\frac{2m}{\bar{r}^2}\right)}-\bar{r}^2(\d \theta^2+\sin^2\theta \d \Omega^2).  
\]  
Note that this mapping does not leave a point invariant ($r=0$ is a singularity and therefore has to be omitted); therefore, the invariants are related via $\bar{I}(\bar{r})=I(r=-i\bar{r})$.  
\paragraph{A pair of  metrics:} Let us consider the pair of metrics:  
\beq\label{weakly1}  
\d s^2_1&=&\frac{1}{z^2}\left(\d x^2+\d y^2+\d z^2\right)-\d \tau^2, \\  
\d s^2_2&=&\frac{1}{z^2}\left(-\d x^2+\d y^2+\d z^2\right)+\d \tau^2. \label{weakly2}  
\eeq   
Now, we can easily see that these can be analytically continued into eachother. This shows that these metrics have indentical invariants. Furthermore, it is also clear that we can analytically continue them into a Euclidean metric of signature $(++++)$. Consequently, these metrics are characterised by their invariants.   
  
This pair of metrics illustrate two things. First,  
the metric (\ref{weakly1}) is of Segre type $\{1,(111)\}$ and is consequently (weakly) $\mathcal{I}$-non-degenerate\footnote{Recall that strongly $\mathcal{I}$-non-degenerate is defined to be the case where the inverse of the $\mathfrak{I}$-map consists of one point, and one point only ($\mathfrak{I}^{-1}(\mathcal{I}_0)\cong\redpoint$ in the notation in section \ref{sect:summary}), while weakly $\mathcal{I}$-non-degenerate is defined to be the case where the inverse of the $\mathfrak{I}$-map may consist of several isolated points (e.g., $\mathfrak{I}^{-1}(\mathcal{I}_0)\cong 3\redpoint$) \cite{inv}.}. The metric (\ref{weakly2}), on the other hand, is actually a Kundt degenerate metric. There are therefore continuous metric deformations of the latter metric to other metrics with identical invariants.   
  
Second, the metric (\ref{weakly1}) is weakly $\mathcal{I}$-non-degenerate but not strongly  $\mathcal{I}$-non-degenerate. The reason for this is the existence of an analytic continuation between two Lorentzian metrics. We might wonder if such spacetimes (which are weakly but not strongly  $\mathcal{I}$-non-degenerate) always have an associated analytic continuation.   
  
\section{Summary: the 4D Lorentzian case} 
\label{sect:summary}
Let us present a summary of the Lorentzian signature case. All of the results presented here are  
based on the 4-dimensional (4D) case, but it is believed that they are also true in higher-dimensions.   
  
Let us consider the space of 4D Lorentzian metrics $\mathfrak{M}$ over an open neighbourhood $U$.  
Consider also, $\mathcal{I}=\{I_i\}$, as the set of scalar polynomial invariants over $U$. The set of invariants  
can be considered as an element of the Cartesian product of smooth functions over $U$,  
$\left[C^{\infty}(U)\right]^{\times N}$ for some $N$. We can thus consider the 
calculation of invariants, $I_i$, from a metric $g_{\mu\nu}\in \mf{M}$ as a map 
$\mf{I}:\mf{M}\mapsto \left[C^{\infty}(U)\right]^{\times N}$, given by the $\mf{I}$-map:  
\[ \mf{I}: g_{\mu\nu}\mapsto I_i.\]   
This map is clearly not surjective, so for a given 
$\mathcal{I}\in \left[C^{\infty}(U)\right]^{\times N}$, the 
inverse image $\mf{I}^{-1}(\mathcal{I})$, may or may not be empty. Let us henceforth assume that we consider only points in the image, $\mathcal{B}\equiv\mf{I}(\mf{M})\subset \left[C^{\infty}(U)\right]^{\times N}$  of $\mf{I}$, and let us consider the sets of metrics with identical invariants; i.e., for a $\mathcal{I}_0\in \mathcal{B}$, $\mf{I}^{-1}(\mathcal{I})$.   
  
The connected components (in the set of metrics) of this inverse image can be of three kinds: 
\begin{enumerate}[i{)}]  
\item{} An isolated point. This would correspond to a metric being $\mathcal{I}$-non-degenerate. We will use the  symbol $\redpoint$ for such a point.   
\item{} A generic Kundt-wedge (or Kundt-tree; it is not entirely clear what the topology of this set is).  
This corresponds to a set of degenerate Kundt metrics \cite{Kundt}, none of which is  
characterised by its invariants, strongly or weakly. This set would be Kundt metrics connected via metric deformations  
with identical invariants. We will use the symbol $\tree$ to illustrate a set of this kind.  
\item{} A special Kundt-wedge containing a metric weakly characterised by its invariants. This set  
contains degenerate Kundt metrics: however, one of the members of this set is a special Kundt metric which is characterised by its invariants in the weak sense. We will use the symbol, $\redtree$, of a set of this kind. The symbol $\redpoint$ corresponds to the special point which is (weakly) characterised by its invariants, at the ``top'' of the Kundt-wedge.   
\end{enumerate}  
  
The various sets $\mf{I}^{-1}(\mathcal{I}_0)$, will consist of these components. In all of  
the examples known to the authors, the points $\redpoint$ are actually connected  
via analytic continuations. Figure \ref{fig1} presents a summary of the various cases that are known.  
\begin{figure}[tbp]  
\centering \includegraphics[width=10cm]{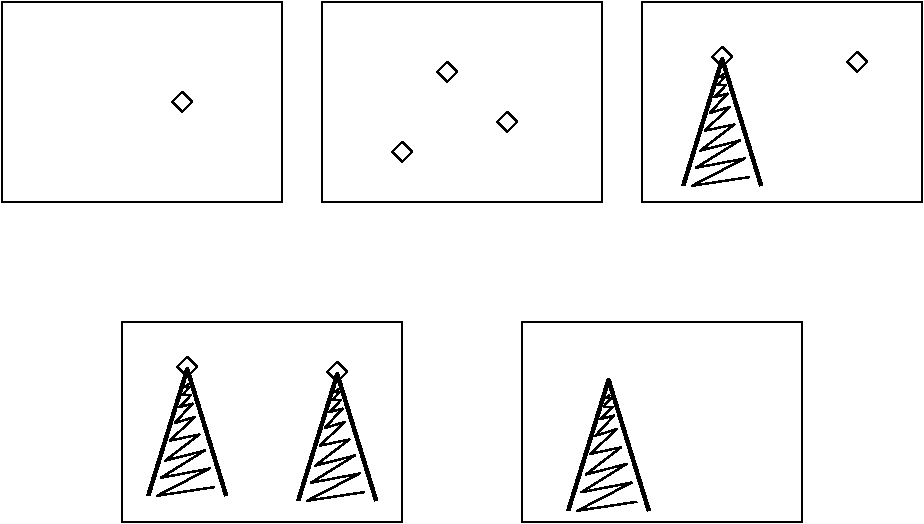}  
\caption{Figures showing the sets of metrics of identical invariants.}  
\label{fig1}  
\end{figure}  
  
Let us consider each figure in turn and comment on the various cases.  
\begin{enumerate}  
\item{} The generic case: Strongly $\mathcal{I}$-non-degenerate. This case consists of only one isolated point $\redpoint$. There is a unique metric with this set of invariants.   
\item{} Two or more isolated points $\redpoint$, all of which are (weakly) $\mathcal{I}$-non-degenerate.  
The Kasner metrics, eq. (\ref{Kasner}), are examples of this kind.   
\item{} Two or more components of types $\redpoint$ and $\redtree$. The pair of metrics, eqs. (\ref{weakly1}) and (\ref{weakly2}), are examples of this type and the metrics given corresponds to the two points $\redpoint$.   
\item{} One, two or more components of type $\redtree$. As an example of this kind are the two metrics $AdS_2\times S^2$ and $\mathbb{H}^2\times dS_2$.   
\item{} A generic Kundt-wedge, $\tree$. Examples of this case are generic degenerate Kundt metrics.  
\end{enumerate}  
It is not clear whether there might exist sets of type $\tree$ combined with any of the other  
components (like $\tree+\redpoint$) but no such examples are known to date.  
If $\mf{I}^{-1}(\mathcal{I}_0)$ consists of several components, then in all of the examples known  
to the authors, there is always an analytic metric continuation connecting points in them (not necessarily all points of the components).   
   
\section{Neutral Signature} 
Let us also briefly discuss the 4D pseudo-Riemannian  case of neutral 
signature $(--++)$ (NS space). Here, little has previously been done with regards to the connection between the invariants and the NS space. 
 
Let us first consider the Segre types of the Ricci tensor. This is essentially the canonical forms of of the Ricci operator and these will also be given for the non-standard types. Note that the canonical forms are obtained using real $SO(2,2)$ transformations. The possible types are \cite{Petrov}: 
\begin{itemize} 
\item{} $\{11,11\}$: All real eigenvalues. Degenerate cases are:  $\{(11),11\}$,  $\{11,(11)\}$,  $\{(11),(11)\}$,  $\{(1,1)(1,1)\}$,  $\{1(1,11)\}$, $(11,1)1\}$,  $\{(11,11)\}$. 
\item{} $\{211\}$. 
\item{} $\{31\}$. 
\item{} $\{4\}$: 4 equal eigenvalues. Canonical forms for Ricci operator, and metric, $g_{\mu\nu}$: 
\beq 
{\sf R}=\begin{bmatrix} 
\lambda & 1 & 0 & 0  \\ 
0 & \lambda &  1 & 0 \\ 
0 & 0 & \lambda &  1 \\ 
0 & 0 & 0 & \lambda  
\end{bmatrix} 
, \qquad 
(g_{\mu\nu})=  
\begin{bmatrix}  
0 & 0 & 0 & 1 \\ 
0 & 0 & 1 & 0 \\ 
0 & 1 & 0 & 0 \\ 
1 & 0 & 0 & 0 
\end{bmatrix}. 
\eeq 
\item{} $\{22\}$: Two non-diagonal Jordan blocks with 2 distinct eigenvalues. Canonical form: 
\[ {\sf R}=\diag({\sf B}_1,{\sf B}_2), \qquad (g_{\mu\nu})=\diag({\sf g}_1,{\sf g}_2),\] 
where  
\[ {\sf B}_A=\begin{bmatrix} 
\lambda_A & \pm 1 \\ 
0 & \lambda_A  
\end{bmatrix}, \qquad  
{\sf g}_A=\begin{bmatrix} 
0 & 1 \\ 1 & 0 
\end{bmatrix} \] 
\item{} $\{1_{\mathbb C}1,1\}$: 2 real and 2 complex conjugate eigenvalues. The block $1_{\mathbb C}$ has canonical form: 
\[ {\sf B}_A=\begin{bmatrix} 
\alpha_A & \beta_A \\ 
-\beta_A & \alpha_A  
\end{bmatrix}, \qquad  
{\sf g}_A=\begin{bmatrix} 
1 & 1 \\ 1 & -1 
\end{bmatrix} \] 
\item{} $\{1_{\mathbb C}1_{\mathbb C}\}$: 2 pairs of complex conjugate eigenvalues. Canonical form is  \[ {\sf R}=\diag({\sf B}_1,{\sf B}_2), \qquad (g_{\mu\nu})=\diag({\sf g}_1,{\sf g}_2),\] 
where  
\[ {\sf B}_A=\begin{bmatrix} 
\alpha_A & \beta_A \\ 
-\beta_A & \alpha_A  
\end{bmatrix}, \qquad  
{\sf g}_A=\begin{bmatrix} 
1 & 1 \\ 1 & -1 
\end{bmatrix} \] 
\item{} $\{2_{\mathbb{C}}\}$: 2 equal pairs of complex conjugate eigenvalues with a non-diagonal Jordan block. Canonical form: 
\[ {\sf R}=\begin{bmatrix} 
\alpha & \beta & 1 & 0 \\ 
-\beta & \alpha & 0 & 1\\ 
0 & 0 & \alpha & \beta \\ 
0 & 0 & -\beta & \alpha 
\end{bmatrix} 
, \qquad (g_{\mu\nu})=\begin{bmatrix} 
0 & 0 & 1 & 1 \\ 
0 & 0 & 1 & -1\\ 
1 & 1 & 0 & 0 \\ 
1 & -1 & 0 & 0 
\end{bmatrix}.\] 
\end{itemize} 
  
\begin{cor} 
Neutral signature metrics of Segre types  $\{11,11\}$, $\{(11),11\}$,  $\{11,(11)\}$,  $\{(11),(11)\}$, $\{1_{\mathbb C}1,1\}$ and  $\{1_{\mathbb C}1_{\mathbb C}\}$ are $\mathcal{I}$-non-degenerate, and are, consequently, characterised by the invariants in the strong sense.  
\end{cor}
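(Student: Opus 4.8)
The plan is to show that for each of the listed Segre types, the Ricci operator $\sf R$ is diagonalisable over $\mathbb{C}$ in a way that is rigid under metric deformations, and then to invoke the machinery of Section~3 (the characteristic equation determines the eigenvalues, hence the projectors $\bot_A\in\mathcal{R}_0$, hence the full operator decomposition) together with the observation that, in four dimensions, the Weyl tensor contributes no obstruction for these types either. Concretely, I would first read off from the canonical forms listed above that precisely the types $\{11,11\}$ (and its partially degenerate relatives $\{(11),11\}$, $\{11,(11)\}$, $\{(11),(11)\}$), $\{1_{\mathbb C}1,1\}$ and $\{1_{\mathbb C}1_{\mathbb C}\}$ have Ricci operator with \emph{no nontrivial Jordan block}: the complex blocks $\begin{bmatrix}\alpha_A&\beta_A\\-\beta_A&\alpha_A\end{bmatrix}$ are diagonalisable over $\mathbb{C}$ (eigenvalues $\alpha_A\pm i\beta_A$), and the real $1,1,1,1$ part is already diagonal. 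Thus ${\sf N}=0$ for the Ricci operator in every one of these cases, so by eq.~(\ref{Tdiag}) the Ricci operator is diagonalisable.

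Next I would address why diagonalisability of $\sf R$ together with triviality of the Weyl contribution suffices for $\mathcal{I}$-non-degeneracy in the \emph{strong} sense, rather than merely ``characterised by invariants''. The key point is that the projectors $\bot_A$ are, by the construction in Section~3, polynomial expressions in $\sf R$ with coefficients that are rational functions of the eigenvalues $\lambda_A$ — and the eigenvalues are themselves roots of the characteristic polynomial, whose coefficients are polynomial curvature invariants. Hence the entire operator $\sf R$, and in turn the Ricci tensor $R_{\mu\nu}$ in any orthonormal-type frame, is reconstructed from the invariants; there is no continuous family of inequivalent Ricci tensors with the same invariants. In 4D with these Segre types the Weyl tensor is freely specified but is itself an $\mathcal{I}$-non-degenerate object for generic (Petrov type I) Weyl, and for the special Petrov types one checks that the combined Riemann operator still has no degeneracy of the dangerous Kundt kind; more efficiently, one can appeal to the result of \cite{inv} that a 4D metric failing $\mathcal{I}$-non-degeneracy must be degenerate Kundt, and degenerate Kundt metrics have Ricci operator of Segre type $\{(1,1)(1,1)\}$, $\{1(1,11)\}$, $\{2_{\mathbb C}\}$, $\{22\}$, $\{31\}$ or $\{4\}$ (i.e.\ containing a genuine Jordan block or a ``null'' degeneracy), and in particular \emph{not} of any of the six types in the statement. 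Therefore a metric of one of these six types cannot be degenerate Kundt, hence is $\mathcal{I}$-non-degenerate.

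The cleanest route, then, is essentially a two-line argument: (1) the listed Segre types are exactly those for which $\sf R$ has no Jordan block and no ``boost-weight'' null degeneracy, so they are disjoint from the Segre types realisable by degenerate Kundt metrics in 4D; (2) by the main theorem of \cite{inv}, any 4D Lorentzian metric that is not $\mathcal{I}$-non-degenerate is degenerate Kundt; combining (1) and (2) gives the corollary, and strong characterisation by invariants is then immediate from the definition since $\mathcal{I}$-non-degeneracy is precisely strong characterisation. I expect the main obstacle to be the bookkeeping in step (1): one must verify carefully that \emph{every} degenerate Kundt Ricci operator lies among the excluded types — in particular handling the neutral-signature-specific types $\{2_{\mathbb C}\}$ and $\{1_{\mathbb C}1_{\mathbb C}\}$ correctly, since the latter \emph{is} in our list (it is diagonalisable, eigenvalues complex) while the former is not (it has a Jordan block). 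One should also be slightly careful that the degenerate relatives $\{(11),11\}$ etc.\ do not accidentally coincide with a Kundt type; since the Kundt degeneracies always involve either a nontrivial Jordan structure or a null eigendirection aligned with the Kundt congruence, and none of the six listed types does, the disjointness holds, but it is worth spelling out.
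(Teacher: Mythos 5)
There is a genuine gap, and it is the central one: your ``cleanest route'' hinges on the dichotomy of \cite{inv} (a 4D metric that is not $\mathcal{I}$-non-degenerate must be degenerate Kundt), but that theorem is proved for \emph{Lorentzian} 4D metrics, while the corollary concerns neutral signature $(--++)$. No analogue of that dichotomy is available in the NS setting --- indeed the paper itself emphasises that the degenerate NS case is open (it only \emph{conjectures}, via the Walker metrics, what the ``Kundt analogues'' might be). So step (2) of your two-line argument cannot be invoked, and the bookkeeping in step (1) (that every ``degenerate Kundt'' Ricci type lies outside the listed six) presupposes a classification of degenerate NS metrics that does not exist. The paper instead proves the corollary directly: for most of the listed types it adapts the orbit-separation arguments of \cite{inv} to the $SO(2,2)$ case, and it treats the two genuinely new types by hand --- for $\{(11),(11)\}$ one gets two projectors each with isotropy $SO(2)$, and since $SO(2)$ is compact its polynomial invariants separate orbits; for $\{1_{\mathbb C}1_{\mathbb C}\}$ one argues as for $\{1_{\mathbb C}1,1\}$ after a complex transformation of each complex block, using that the eigenvalues are all distinct.

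Your first argument has a second, related weakness: diagonalisability of ${\sf R}$ over $\mathbb{C}$, plus reconstruction of the projectors from the characteristic polynomial, only shows that the Ricci operator (and hence, at best, the space) is \emph{characterised by its invariants}, possibly in the weak sense. $\mathcal{I}$-non-degeneracy is a statement about the absence of metric deformations preserving \emph{all} invariants, i.e.\ about the full curvature and its covariant derivatives modulo the (non-compact) isometry group $SO(2,2)$; knowing the eigenvalues and projectors of ${\sf R}$ does not by itself exclude continuous families, precisely because the isotropy of a degenerate canonical form can be non-compact and then invariants need not separate orbits. That is exactly why the paper singles out $\{(11),(11)\}$ and $\{1_{\mathbb C}1_{\mathbb C}\}$ for separate treatment rather than deducing everything from ${\sf N}=0$. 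The logical order in the paper is the reverse of yours: one proves $\mathcal{I}$-non-degeneracy by orbit-separation arguments, and strong characterisation then follows by definition.
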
 
\begin{proof} 
The proof of this corollary is, except for types $\{(11),(11)\}$ and $\{1_{\mathbb C}1_{\mathbb C}\}$, identical to the Lorentzian case \cite{inv}. For type $\{(11),(11)\}$, we obtain two projectors, each with symmetry $SO(2)$. We can therefore project onto $SO(2)$-tensors and since this is a compact group, the invariants split orbits. Consequently, this is case is $\mathcal{I}$-non-degenerate. The case $\{1_{\mathbb C}1_{\mathbb C}\}$ is analogous to case $\{1_{\mathbb C}1,1\}$, but we need to perform a complex transformation for each of the two complex blocks. Since all eigenvalues must be different, we get that this case is also  $\mathcal{I}$-non-degenerate. 
\end{proof} 
 
For the Weyl operator ${\sf C}$, this splits into a self-dual and anti-self-dual part: ${\sf C}={\sf W}^+\oplus {\sf W}^-$. In \cite{Law} Law classified the Weyl tensor of NS metrics using the Weyl operator (or endomorphism). Using the Hodge star operator, $\star$, which commutes with the Weyl operator -- i.e., $\star \circ{\sf C}={\sf C}\circ\star$ -- the (anti-)self-dual operators can be defined as:  
\[ {\sf W}^\pm=\tfrac 12\left({\sf 1}\pm\star\right){\sf C}.\] 
 
Each of the parts can be considered to be symmetric and tracefree with respect to the 3-dimensional Lorentzian metric with signature $(+--)$. Consquently, each of the operators ${\sf W}^{\pm}$ can be classified according to ``Segre type''(the ``Type'' refers to Law's enumeration):  
\begin{itemize} 
\item{} Type Ia: $\{1,11\}$ 
\item{} Type Ib: $\{z\bar{z}1\}$ 
\item{} Type II: $\{21\}$ 
\item{} Type III:$\{3\}$.  
\end{itemize} 
It is also advantageous to refine Law's enumeration for the special cases:  
\begin{itemize} 
\item{} Type D: $\{(1,1)1\}$ 
\item{} Type N: $\{(21)\}$  
\end{itemize} 
 
Based on this classification, we get the following result: 
\begin{thm} 
If a neutral signature 4D metric has Weyl operators ${\sf W}^{\pm}$ both of type I, then the metric is $\mathcal{I}$-non-degenerate. 
\end{thm}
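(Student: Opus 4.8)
The plan is to show that when both $\mathsf{W}^{\pm}$ are of type I (i.e.\ diagonalisable as operators on the relevant $3$-dimensional space), the whole curvature operator algebra $\mathcal{R}$ is diagonalisable over $\mathbb{C}$, which by Definition \ref{def:char} is exactly $\mathcal{I}$-non-degeneracy (in the weak or strong sense) --- and then upgrade this to strong $\mathcal{I}$-non-degeneracy. Note that ``type I'' for $\mathsf{W}^{\pm}$ comes in two sub-cases: Ia, with three distinct real eigenvalues on the $(+--)$ space, and Ib, with one real and a complex-conjugate pair; in either case each $\mathsf{W}^{\pm}$ is a diagonalisable endomorphism (the complex pair in Ib is still semisimple over $\mathbb{C}$). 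The first step is therefore to record that diagonalisability of $\mathsf{C}=\mathsf{W}^{+}\oplus\mathsf{W}^{-}$ follows immediately, since the two blocks act on complementary invariant subspaces of $\wedge^2 T_pM$ and the Hodge star already block-diagonalises $\mathsf{C}$.

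Second, I would identify the residual symmetry group left invariant by the eigenbivector decomposition. Generically (type Ia$\times$Ia with all eigenvalues distinct on both sides) the Weyl operator alone singles out a null or orthonormal frame up to a discrete group, and one argues as in the Lorentzian proof of \cite{inv}: the projectors $\bot_A\in\mathcal{R}_0$ are genuine tensors, their simultaneous eigenspaces define a canonical frame (up to the stabiliser), and since that stabiliser is \emph{discrete} the orbits of the isotropy group on the curvature tensors are separated by polynomial invariants --- no continuous deformation can preserve all invariants. In the degenerate type-I sub-cases (equal eigenvalues within one $\mathsf{W}^{\pm}$, giving an $SO(2)$ or $SO(1,1)$ factor, or complex eigenvalues pairing directions) one falls back exactly on the mechanism used in the Corollary above for Segre types $\{(11),(11)\}$ and $\{1_{\mathbb C}1_{\mathbb C}\}$: either the residual group is \emph{compact} ($SO(2)$), so invariants separate orbits by the classical theorem on compact group actions, or the remaining non-compact factor is an $SO(1,1)$ acting with a complex eigenvalue structure that again forces all eigenvalues to be distinct and hence pins down the frame after a complex rotation. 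Here one must also bring in the Ricci operator and the covariant derivatives: one checks that whatever continuous freedom survives in the Weyl projectors is killed once $\nabla R$, $\nabla^{(2)}R,\dots$ are adjoined, so that $\mathcal{R}$ (not just $\mathcal{R}_0$) is diagonalisable; for the truly homogeneous situations this is automatic, and otherwise a boost-weight/alignment argument rules out type II, III, N behaviour because a type-I Weyl operator is already fully aligned.

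Third, I would assemble these pieces: for each of the four type-I$\times$type-I combinations, exhibit that $\mathcal R$ is diagonalisable over $\mathbb C$ and that the residual isotropy acts with separated orbits, concluding $\mathcal{I}$-non-degeneracy. Because the relevant residual groups are either discrete or compact (or become discrete after adjoining higher-order curvature), the inverse image $\mf{I}^{-1}(\mathcal I_0)$ is $0$-dimensional, so the metric is in fact \emph{strongly} characterised by its invariants --- this matches the phrasing of the theorem, which asserts plain $\mathcal{I}$-non-degeneracy.

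\textbf{Main obstacle.} The routine part is type Ia$\times$Ia with generic eigenvalues, which mimics \cite{inv}. The hard part will be the degenerate type-I sub-cases and, above all, proving that no continuous metric deformation hides in the \emph{higher-order} operators: type I of $\mathsf W^{\pm}$ constrains only the zeroth-order Weyl data, so one genuinely needs to argue --- via alignment theory and the Lie-transport Corollary for projectors proved above --- that the eigenframe fixed at order $0$ propagates, leaving the $\nabla^{(k)}R$ no room to carry an invariant-preserving deformation. Controlling the $SO(1,1)$-type non-compact residual factors (present precisely in the mixed real/complex type-I cases) without an explicit canonical form is where the argument is most delicate.
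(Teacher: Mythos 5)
Your plan has the right flavour in one respect (reduce to a situation where the residual isotropy is compact, then use the fact that invariants of a compact group action separate orbits, as in the Segre type $\{(11),(11)\}$ case), but it misses the single constructive idea that the paper's proof rests on, and this omission is exactly what leaves your ``main obstacle'' unresolved. The paper does not argue case by case over the type-I subcases, nor does it need to adjoin $\nabla R$, $\nabla^{(2)}R,\dots$ to kill residual freedom, nor does any non-compact $SO(1,1)$ factor ever appear. The key step is that \emph{each} type-I operator ${\sf W}^{\pm}$ (Ia or Ib alike) singles out the $+$-direction of the fictitious $(+--)$ space, i.e.\ a preferred real eigenbivector $F^{\pm}_{\mu\nu}$, which in an orthonormal frame takes the form ${\sf F}^{\pm}=\diag({\sf F}_1,\pm{\sf F}_1)$ with ${\sf F}_1$ the standard $2\times2$ rotation generator. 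The product ${\sf F}^+{\sf F}^-=\diag(-1,-1,1,1)$ is then a curvature operator of Segre type $\{(11),(11)\}$: it splits the tangent space into two definite $2$-planes, so the residual isotropy is the compact group $SO(2)\times SO(2)$, and the orbit-separation argument already used for Ricci type $\{(11),(11)\}$ applies verbatim to the full curvature tensors at all orders. This one operator disposes simultaneously of all four type-I$\times$type-I combinations, including the mixed real/complex ones you flag as delicate, so the higher-order alignment/Lie-transport machinery you propose is not needed.

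A second, conceptual, problem: you assert that diagonalisability of $\mathcal{R}$ over $\mathbb{C}$ ``by Definition \ref{def:char} is exactly $\mathcal{I}$-non-degeneracy.'' It is not: Definition \ref{def:char} defines \emph{characterised by its invariants}, whereas $\mathcal{I}$-non-degeneracy is the stronger statement that no metric deformation preserves the set of invariants (the paper's type D$^k$ and (anti-)de Sitter examples are diagonalisable but not $\mathcal{I}$-non-degenerate). So proving diagonalisability of $\mathcal{R}$, even if you completed that step, would not establish the theorem; what carries the load is precisely the compactness of the residual isotropy obtained from ${\sf F}^+{\sf F}^-$, which rules out invariant-preserving deformations. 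Your step two gestures at this mechanism but only invokes it as a fallback for ``degenerate subcases,'' whereas it is the whole proof.
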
 
\begin{proof} 
The proof utilises the fact that each type I operator ${\sf W}^{\pm}$, can pick up the $+$-direction of the fictitious 3D space with signature $(+--)$. In the real case, this gives rise to the eigenbivector $F_{\mu\nu}$ where, in the orthonormal frame,  
\[ 
{\sf F}^{\pm}\equiv (F^{\mu}_{~\nu})=\diag({\sf F}_1,\pm {\sf F}_1), \quad  
{\sf F}_1=\begin{bmatrix} 0 & 1 \\ 
-1 & 0 
\end{bmatrix} 
\]  
where the signs $\pm$ correspond to ${\sf W}^{\pm}$. Now by using these eigenvectors, we can construct the operator: 
\[ {\sf F}^+{\sf F}^-=\diag(-1,-1,1,1), \]  
which is of type $\{(11),(11)\}$. Consequently, using the argument from the Ricci type  $\{(11),(11)\}$, this NS space is $\mathcal{I}$-non-degenerate. 
\end{proof} 
Note that this is the generic case; that is,  
the general NS space is characterised by its invariants in the strong sense.  
 
Let us also discuss the NS case in light of theorem \ref{analytic}. As pointed out, 
the theorem is valid for any signature and thus applies to the NS case also. Consider the Euclidean 4D Schwarzschild spacetime as an example: 
\beq  
\d s^2=\left(1-\frac{2m}r\right)\d\tau^2+\frac{\d r^2}{\left(1-\frac{2m}r\right)}+r^2(\d \theta^2+\sin^2\theta \d \phi^2).  
\eeq   
Here, there are (at least) three complex metric extensions yielding an NS space, namely $(\bar{\tau},r,\theta,\bar{\phi})=(i{\tau},r,\theta,i{\phi})$,  $({\tau},r,\bar\theta,{\phi})=({\tau},r,i\theta,{\phi})$ and $(\bar{\tau},r,\bar\theta,\bar{\phi})=(i{\tau},r,i\theta,i{\phi})$, giving the triple of metrics: 
\beq 
\d s^2_1&=&-\left(1-\frac{2m}r\right)\d\bar{\tau}^2+\frac{\d r^2}{\left(1-\frac{2m}r\right)}+r^2(\d \theta^2-\sin^2\theta \d \bar{\phi}^2), \label{NS1} \\ 
\d s^2_2&=&\phantom{-}\left(1-\frac{2m}r\right)\d\tau^2+\frac{\d r^2}{\left(1-\frac{2m}r\right)}-r^2(\d \bar{\theta}^2+\sinh^2\bar\theta \d \phi^2), \label{NS2}\\
\d s^2_3&=&-\left(1-\frac{2m}r\right)\d\bar\tau^2+\frac{\d r^2}{\left(1-\frac{2m}r\right)}-r^2(\d \bar{\theta}^2-\sinh^2\bar\theta \d \bar\phi^2).  \label{NS3}
\eeq 
Therefore, these three metrics are characterised by its invariants. In order to determine whether they are strongly or weakly characterised by their invariants, a more thorough investigation of the NS case is needed. As a preliminary investigation we can try to determine the Weyl type of these metrics. By calculating ${\sf W}^\pm$ using the definitions, we get the following types:
\begin{enumerate}
\item{} Metric (\ref{NS1}): Type $\{(1,1)1\}\times \{(1,1)1\}$ or type D$\times$D. 
\item{} Metric (\ref{NS2}): Type $\{1,(11)\}\times \{1,(11)\}$ or type Ia$\times$Ia.
\item{} Metric (\ref{NS3}): Type $\{(1,1)1\}\times \{(1,1)1\}$ or type D$\times$D. 
\end{enumerate}
We can therefore conclude that metric (\ref{NS2}) is $\mathcal{I}$-non-degenerate, while the other metrics require a more thorough study.\footnote{The two metrics (\ref{NS1}) and (\ref{NS3}) are actually locally diffeomorphic, however, these two different forms are useful if we wish to extend these to degenerate metrics.}

In the Lorentzian case the spacetimes not characterised by their scalar invariants 
are Kundt spacetimes. We may therefore wonder what are the NS spaces that are not characterised by their invariants. A hint may be provided by utilising an analytic metric continuation of the following Lorentzian case.  
 
Consider the vacuum plane wave spacetimes \cite{exsol}: 
\beq 
\d s^2=2\d u(\d v+H(u,x,y)\d u)+\d x^2+\d y^2,  
\eeq 
where  $H(u,x,y)=f(u,x+iy)+\bar{f}(u,x-iy)$ for an analytic function $f(u,z)$. This is a well-known vacuum Kundt-VSI spacetime.  
Assume that we consider the special case $\bar{f}(u,\bar{z})=f(u,\bar{z})$. Then, we can perform the analytic metric continuation $\tilde{y}=iy$ which gives the NS metric: 
\beq 
\d s^2=2\d u(\d v+\tilde{H}(u,x,\tilde{y})\d u)+\d x^2-\d \tilde{y}^2,  
\eeq 
where $\tilde{H}(u,x,\tilde{y})=f(u,x+\tilde{y})+{f}(u,x-\tilde{y})$. 
Since the analytic continutation preserves the invariants and the structure, 
this metric must be VSI and a solution to the vacuum equations. Moreover, 
it cannot be characterised by its invariants and therefore represents  a ``Kundt analogue'' of the NS case. Interestingly, this metric is a special 4D Walker metric \cite{Walker}.  
 
The Walker metrics can be written as 
\beq\label{Walker}
\d s^2=2\d u(\d v+H_1\d u+W\d x)+2\d x(\d y+H_2\d x+W\d u),
\eeq
 where $H_1=H_1(u,v,x,y)$, $H_2=H_2(u,v,x,y)$ and $W=W(u,v,x,y)$. Such metrics admit a field of parallell null 2-planes. These metrics seem to possess some of the right curvature properties to be NS candidates for degenerate NS metrics \cite{curvW}. 

Let us introduce the double null-frame: 
\beq
{\mbold\omega}^1 =\d u && {\mbold\omega}^2=\d v+H_1\d u+W\d x\nonumber \\
{\mbold\omega}^3=\d x && {\mbold\omega}^4=\d y+H_2\d x+W\d u.
\label{doublenull}\eeq
For neutral metrics we can introduce two independent boost weights, $(b_1,b_2)$, corresponding to the two boosts: 
\beq
{\mbold\omega}^1 \mapsto e^{\lambda_1}{\mbold\omega}^1, && {\mbold\omega}^2 \mapsto e^{-\lambda_1}{\mbold\omega}^2,\nonumber \\
 {\mbold\omega}^3 \mapsto e^{\lambda_2}{\mbold\omega}^3, && {\mbold\omega}^4 \mapsto e^{-\lambda_2}{\mbold\omega}^4.
\eeq
Note that any invariant must have boost weight $(0,0)$, in addition, denoting $(T)_{(b_1,b_2)}$ as the projection of the tensor $T$ onto the boost weight $(b_1,b_2)$, then $(T)_{(b_1,b_2)}\otimes(\tilde{T})_{(\tilde{b}_1,\tilde{b}_2)}$ is of boost weight $(b_1+\tilde{b}_1,b_2+\tilde{b_2})$. Moreover, 
\[ (T\otimes\tilde{T})_{(\hat{b}_1,\hat{b}_2)}=\sum_{(b_1+\tilde{b}_1,b_2+\tilde{b}_2)=(\hat{b}_1,\hat{b}_2)}(T)_{(b_1,b_2)}\otimes(\tilde{T})_{(\tilde{b}_1,\tilde{b}_2)}.\]
Further, we will say that a tensor $T$, possesses the ${\bf N}$-property, iff: 
\beq
(T)_{(b_1,b_2)}=0,\quad\text{for}\quad
\begin{cases}
b_1>0, & b_2~\text{arbitrary},\\
b_1=0, & b_2>0, \\
b_1=0, & b_2=0.
\end{cases}
\eeq  
\begin{figure}[tbp]  
\centering \includegraphics[width=6cm]{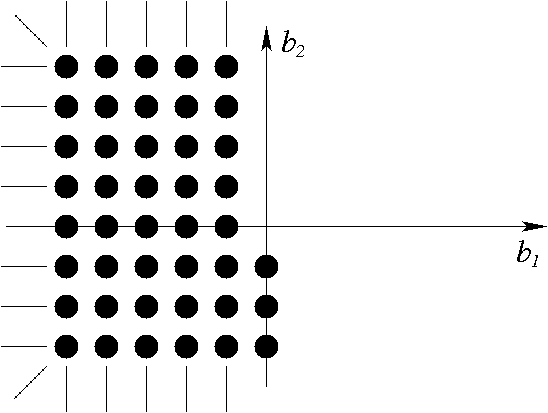}  
\caption{The tensors that fulfill the ${\bf N}$-property have components as indicated in the figure. The allowable boost weights, $(b_1,b_2)$, fill a semi-infinite grid in ${\mb Z}^2$ (which is closed under addition) indicated by black dots. }  
\label{fig2}  
\end{figure}  
Figure \ref{fig2} illustrates the allowable components in terms of their boost weight. 
We note that if both $T$ and $S$ possess the ${\bf N}$-property, then so does $T\otimes S$ since the resulting boost weights can be considered as ``vector addition'' in ${\mb Z}^2$. 

\begin{prop}
The eigenvalues of an operator, ${\sf T}$, possessing the  ${\bf N}$-property (as a tensor) are all zero; consequently, ${\sf T}$ is nilpotent.
\end{prop}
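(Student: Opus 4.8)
The plan is to show that all power traces $\Tr({\sf T}^k)$, $k\ge 1$, vanish, and then to read off from Newton's identities that the characteristic polynomial of ${\sf T}$ is $\lambda^{\dim V}$; then every eigenvalue is zero and ${\sf T}$ is nilpotent by Cayley--Hamilton.

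The first step is to check that the ${\bf N}$-property is inherited by every power of ${\sf T}$. Fix a boost-weight eigenbasis $\{{\bf e}_I\}$ of $V$ adapted to the double null frame (\ref{doublenull}), with ${\bf e}_I$ of boost weight ${\bf b}_I\in{\mb Z}^2$, so that the component $T^I_{~J}$ of ${\sf T}$ sits at boost weight ${\bf b}_I-{\bf b}_J$. Since composition of operators is a contraction of a tensor product, $({\sf T}^2)^I_{~K}=\sum_J T^I_{~J}T^J_{~K}$, and a nonzero summand forces ${\bf b}_I-{\bf b}_J$ and ${\bf b}_J-{\bf b}_K$ both to lie in the region of boost weights allowed by the ${\bf N}$-property; as this region is closed under addition (the remark preceding the proposition), ${\bf b}_I-{\bf b}_K=({\bf b}_I-{\bf b}_J)+({\bf b}_J-{\bf b}_K)$ lies there too. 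Hence ${\sf T}^2$, and by induction ${\sf T}^k$ for all $k\ge 1$, again possesses the ${\bf N}$-property.

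The second step is immediate: $\Tr({\sf T}^k)=\sum_I({\sf T}^k)^I_{~I}$ is a sum of components of ${\sf T}^k$ sitting at boost weight ${\bf b}_I-{\bf b}_I=(0,0)$, and $(0,0)$ is precisely one of the boost weights at which the ${\bf N}$-property forces components to vanish. Therefore $\Tr({\sf T}^k)=0$ for every $k\ge 1$.

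Finally, writing $\lambda_1,\dots,\lambda_{\dim V}$ for the eigenvalues of ${\sf T}$ listed with algebraic multiplicity, the power sums $p_k=\sum_i\lambda_i^k=\Tr({\sf T}^k)$ vanish for $k=1,\dots,\dim V$; over $\mathbb{C}$ Newton's identities then force the elementary symmetric functions $e_1,\dots,e_{\dim V}$ of the $\lambda_i$ to vanish, so the characteristic polynomial of ${\sf T}$ is $\lambda^{\dim V}$, all eigenvalues are zero, and ${\sf T}$ is nilpotent. (One could equally bypass the eigenvalue computation: ordering $\{{\bf e}_I\}$ by boost weight makes ${\sf T}$ strictly lower-triangular with respect to the corresponding order on the ${\bf b}_I$, so some power of ${\sf T}$ vanishes outright.) I do not anticipate a real obstacle; the only delicate point is the boost-weight bookkeeping in the first two steps --- that composing operators with the ${\bf N}$-property keeps one inside the allowed region, and that the diagonal of any operator lives at boost weight $(0,0)$ --- and this is just the additive grading already exploited for $T\otimes S$.
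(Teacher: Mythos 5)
Your proof is correct and takes essentially the same route as the paper: show that the ${\bf N}$-property is preserved under powers of ${\sf T}$ (via additivity of the boost-weight grading, closed under addition), observe that $\Tr({\sf T}^k)$ is a boost weight $(0,0)$ component and hence vanishes, and conclude that all eigenvalues are zero. You merely spell out the final step (Newton's identities / strict triangularity) that the paper leaves implicit in ``the eigenvalues are therefore all zero.''
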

\begin{proof}
We note that the metric $g$ is of boost weight $(0,0)$; thus, if a tensor, ${\sf T}$ possesses the ${\bf N}$-property, so will any full contraction. Therefore, since an invariant must be of boost weight  $(0,0)$, $\Tr( {\sf T})=0$. Furthermore, since ${\sf T}^n$ must possess the ${\bf N}$-property also, $\Tr ({\sf T}^n)=0$. The eigenvalues are therefore all zero. 
\end{proof}

Let us now consider the Walker metrics. In the basis (\ref{doublenull}) we note that by counting the type of index (when all are downstairs!) we can get the boost weight as follows: $b_1=\#(1)-\#(2)$ and $b_2=\#(3)-\#(4)$. We are now in a position to state:
\begin{prop}
The Walker metrics eq.(\ref{Walker}) where
\beq
H_1&=& vH_1^{(1)}(u,x,y)+H_1^{(0)}(u,x,y)\nonumber \\
H_2&=& vH_2^{(10)}(u,x)+yH_2^{(01)}(u,x)+H_2^{(0)}(u,x)\nonumber \\
W&=& vW^{(1)}(u,x)+W^{(0)}(u,x,y),
\eeq
are NS metrics having the following properties: 
\begin{enumerate}
\item{} If $H_2^{(10)}(u,x)\neq 0$, then all 0th and 1st order invariants vanish; i.e., they are VSI$_1$ metrics.
\item{} If $H_2^{(10)}(u,x)=0$ then \emph{all} polynomial curvature invariants vanish; i.e., they are VSI metrics. 
\end{enumerate}
\end{prop}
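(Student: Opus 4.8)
The plan is to work entirely in the double null frame (\ref{doublenull}) and to decompose every tensor by its bigraded boost weight $(b_1,b_2)$, using the bookkeeping rule $b_1=\#(1)-\#(2)$, $b_2=\#(3)-\#(4)$ for all frame indices down. The strategy rests on the preceding Proposition together with the remark that the ${\bf N}$-property is preserved under tensor products: a polynomial curvature invariant is a full contraction of a tensor product of copies of $R,\nabla R,\dots,\nabla^{(k)}R$, such a product inherits the ${\bf N}$-property whenever each factor has it, and a full contraction of an ${\bf N}$-property tensor is a scalar, i.e. of boost weight $(0,0)$, which lies in the forbidden region and must therefore vanish. Hence the whole proposition reduces to a statement about the boost-weight content of $R$ and its covariant derivatives: for part (2) it suffices that $\nabla^{(k)}R$ possess the ${\bf N}$-property for all $k$; for part (1) it suffices that $R$ and $\nabla R$ possess it, since every $0$th and $1$st order invariant is then a full contraction of a tensor product of copies of $R$ and $\nabla R$ and hence vanishes.

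First I would compute the frame components of the Riemann tensor of the Walker metric (\ref{Walker}) for the stated Ansatz for $H_1,H_2,W$. Since the distribution spanned by $e_2=\partial_v$ and $e_4=\partial_y$ is a parallel, totally null $2$-plane, the holonomy is reduced and a large block of curvature components is forced to vanish; a direct calculation then shows that every nonzero component of $R$ has boost weight in the allowed region $\{b_1<0\}\cup\{(0,b_2):b_2<0\}$, so $R$ has the ${\bf N}$-property irrespective of whether $H_2^{(10)}$ vanishes. In particular all $0$th order invariants vanish in both cases. Next I would write out the Ricci rotation coefficients for the frame (\ref{doublenull}) with the general Ansatz and track how $\nabla$ acts on the bigrading. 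The key structural input is that, because $e_2$ and $e_4$ span a parallel null $2$-plane, the connection is of Kundt type in both null directions, so that covariant differentiation of a component lying in the allowed region returns a sum of components again in the allowed region, with one exception: the contribution fed by the $v$-linear term $vH_2^{(10)}(u,x)$ of $H_2$, which, once acted on by a $b_1$-raising derivative, eventually produces a term at a boost weight $(0,b_2)$ with $b_2\ge 0$. Isolating $vH_2^{(10)}$ as the unique such obstruction is the crux of the argument.

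With this in hand the two cases separate. If $H_2^{(10)}=0$ no unprotected term is present, and an induction on $k$ — differentiate, apply the connection bookkeeping, observe that one never leaves the allowed region — shows that $\nabla^{(k)}R$ has the ${\bf N}$-property for every $k$; the reduction above then gives VSI, which is part (2). If $H_2^{(10)}\neq 0$, one checks that $R$ and $\nabla R$ still have the ${\bf N}$-property, the obstruction requiring a second differentiation to climb out of the allowed region, so that all $0$th and $1$st order invariants vanish and the metric is VSI$_1$, which is part (1); exhibiting the escaping component of $\nabla^{(2)}R$ at boost weight $(0,0)$ (proportional, up to numerical factors and lower-order data, to $(H_2^{(10)})^2$) then shows the dichotomy is sharp, in that such a metric is generically not VSI. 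I expect the main obstacle to be the middle step — assembling the rotation coefficients for the general Ansatz and proving the closure-under-$\nabla$ statement, in particular verifying that $vH_2^{(10)}$ is the only source of a boost-weight increase past the ${\bf N}$-region and that it is harmless at first order but not beyond. Everything else is either a bounded computation ($R$, and $\nabla^{(2)}R$ in the sharpness remark) or a formal consequence of the Proposition and of the closure of the ${\bf N}$-property under tensor products and contraction.
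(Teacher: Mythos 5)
Your overall reduction is the same as the paper's: use the preceding proposition and the closure of the ${\bf N}$-property under tensor products and full contractions to reduce everything to showing that $R$ and $\nabla R$ (part 1), respectively all $\nabla^{(k)}R$ (part 2), possess the ${\bf N}$-property, with part 1 settled by a direct frame computation and with $vH_2^{(10)}$ correctly identified as what spoils the ${\bf N}$-property of the connection coefficients. Up to that point you are in step with the paper.

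However, there is a genuine gap in part 2, precisely at the step you yourself flag as the ``main obstacle''. Your assertion that, once the rotation coefficients have the ${\bf N}$-property, ``covariant differentiation of a component lying in the allowed region returns a sum of components again in the allowed region'' is not automatic and is the whole content of the induction: the derivative index itself raises the boost weight, so a component of $\nabla^{(k)}R$ at weight $(-1,b_2)$ with $b_2\geq 0$, or at $(0,-1)$, differentiated along the frame directions ${\mbold\omega}^1$ or ${\mbold\omega}^3$, lands at a forbidden weight, and the dangerous term is the partial-derivative term $\partial T$ in $\nabla T=\partial T-\sum\Gamma*T$, which is not controlled by any ${\bf N}$-property of $\Gamma$. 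Appealing to the parallel null $2$-plane or to the connection being ``of Kundt type in both null directions'' does not show that these boost-weight-raising partial derivatives vanish. The paper closes exactly this hole by an induction in the number of derivatives in which the potentially violating components $\nabla_{1}R_{\alpha\beta\mu\nu;\rho_1\ldots\rho_k}$ and $\nabla_{3}R_{\alpha\beta\mu\nu;\rho_1\ldots\rho_k}$ are rewritten, using the second Bianchi identity together with the generalised Ricci (commutator) identity, as derivatives with respect to the safe directions $2$ and $4$ of lower-order tensors plus curvature correction terms; the corrections are products of ${\bf N}$-property tensors and the safe derivatives vanish at the forbidden weights by the inductive hypothesis. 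Without this (or an equivalent) mechanism your induction does not go through, so the VSI claim in part 2 is not established by the proposal as written; the sharpness remark about a $(0,0)$ component of $\nabla^{(2)}R$ proportional to $(H_2^{(10)})^2$ is also asserted rather than shown, though it is not needed for the proposition.
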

\begin{proof}
Using the functions as defined in the theorem, then we get by direct calculation (using, for example, GRTensorII): 
\begin{itemize}
\item{} The Riemann tensor, $R$, possesses the ${\bf N}$-property. 
\item{} The covariant derivative of the Riemann tensor, $\nabla R$, possesses the ${\bf N}$-property.
\end{itemize}
Consequently, it is a VSI$_1$ space, and the first part of the theorem follows.

Concerning the last part, we note that for $H^{(10)}_2=0$, then the rotation coefficients, $\Gamma^\alpha_{~\mu\nu}$, possess the ${\bf N}$-property also. The covariant derivative of a tensor $T$ can symbolically be written 
\[ \nabla T=\partial T-\sum \Gamma*T,\]
Therefore, if both $T$ and $\Gamma$ possess the ${\bf N}$-property, then the only term that can prevent $\nabla T$ possessing the ${\bf N}$-property also is the partial derivative term $\partial T$.
We therefore need to check the terms that raise the boost weight in such a way that it could potentially violate the ${\bf N}$-property. 

For the Riemann tensor we can use the Bianchi identity, 
\beq
R_{\alpha\beta\mu\nu;\rho}=-R_{\alpha\beta\rho\mu;\nu}+R_{\alpha\beta\rho\nu;\mu},
\eeq
and for general tensors, the generalised Ricci identity,
\beq
[\nabla_{\mu},\nabla_{\nu}]T_{\alpha_1...\alpha_k}=\sum_{i=1}^kT_{\alpha_1...\lambda...\alpha_k}R^{\lambda}_{~\alpha_i\mu\nu}.
\eeq
We need to check the components $\nabla_{1}R_{\alpha\beta\mu\nu;\rho_1...\rho_k}$ and $\nabla_{3}R_{\alpha\beta\mu\nu;\rho_1...\rho_k}$ that can potentially violate the ${\bf N}$-property. We will do this by induction in $n$. From the first part of the theorem (which we have already shown), we know it is true for $n=0$ and $n=1$. Assume therefore its true for $n=k$ and $n=k-1$. Then we need to check for $n=k+1$. Let us first consider $\nabla_{1}R_{\alpha\beta\mu\nu;\rho_1...\rho_k}=R_{\alpha\beta\mu\nu;\rho_1...\rho_k 1}$.  We can now use the generalised Ricci identity (and possibly the Bianchi identity) to rewrite this component as a derivative w.r.t. $2$-, $3$-, or $4$-components. Of these, the only potentially dangerous term is the $R_{\alpha\beta\mu\nu;\rho_1...\rho_k 3}$, of boost weight $(0,0)$. If $\rho_k=1$ or $\rho_k=3$, then we use generalised Ricci to permute the indices so that $\rho_k$ is $2$ or $4$ (we can always do this). Then we use the generalised Ricci again to permute the last two indices $\rho_n$ and $3$. We now see that the $(0,0)$ component can be replaced with $\nabla_{2}R_{\alpha\beta\mu\nu;\rho_1...\rho_k}$ or $\nabla_{4}R_{\alpha\beta\mu\nu;\rho_1...\rho_k}$; consequently, these are zero. Hence, the tensor  $\nabla_{\rho_{k+1}}R_{\alpha\beta\mu\nu;\rho_1...\rho_k}$ fulfills the ${\bf N}$-property also. 

The theorem now follows from these results.
\end{proof}

This clearly indicates that the Walker metrics are indeed ``Kundt analogues'' for NS metrics, however, its not clear whether all degenerate NS metrics are Walker metrics. The investigation of the degenerate NS spaces will be left for future work. 
\section{Discussion}

In this paper we have further studied the question of when a pseudo-Riemannain manifold can be locally 
characterised by its scalar polynomial curvature invariants.
In particular, we have introduced the new concepts of diagonalisability
and analytic metric extension and proven some important theorems.
In the final two sections we have discussed some applications of these results. First,
we considered the 4d Lorentzian case, in part to illustrate the explicit theorems 
that are known  \cite{inv}. Second, we have initiated an investigation of the
neutral signature case, which has not been widely studied previously. We intend to continue the
study of the neutral signature case in future work.

\newpage

{\em Acknowledgements}. 
This work was in part supported by NSERC of Canada.

\end{document}